\numberwithin{equation}{section}
\newtheorem{theo}{Theorem}[section]
\newtheorem{cor}[theo]{Corollary}
\newtheorem{lemma}[theo]{Lemma}
\newtheorem{prop}[theo]{Proposition}
\theoremstyle{definition}
\newtheorem{defi}[theo]{Definition}
\newtheorem{rem}[theo]{Remark}
\newtheorem{nota}[theo]{Notation}
\newcommand{\F}{{\mathbb F}}
\renewcommand{\P}{{\mathbb P}}
\newcommand{\FFF}{\mathbb{F}_q^{3\times 3}}
\newcommand{\cC}{{\mathcal C}}
\newcommand{\cD}{{\mathcal D}}
\newcommand{\cF}{{\mathcal F}}
\newcommand{\cI}{{\mathcal I}}
\newcommand{\cS}{{\mathcal S}}
\newcommand{\rk}{\mbox{${\rm rk}$\,}}
\newcommand{\Prob}{\mbox{\rm Prob}\,}
\renewcommand{\mod}{\mbox{\rm mod}\,}
\newcommand{\Orbt}{\mbox{${\rm Orb}_\tau$}}
\newcommand{\T}{\mbox{$\!^{\sf T}$}}
\newcommand{\subspace}[1]{\mbox{$\langle{#1}\rangle$}}
\newcommand{\GL}{\mathrm{GL}}
\newcommand{\Stab}{\mathrm{Stab}}
\newcommand{\bra}[1]{\mbox{$^{[#1]}$}}
\newcommand{\Gaussian}[2]{\genfrac{[}{]}{0pt}{1}{#1}{#2}}
\newcounter{alp}
\newcounter{ara}
\newcounter{rom}
\newenvironment{romanlist}{\begin{list}{(\roman{rom})\hfill}{\usecounter{rom}
     \topsep0ex \labelwidth1.7em \leftmargin1.7em \labelsep0cm
     \rightmargin0cm \parsep0ex \itemsep.4ex
     \partopsep1ex}}{\end{list}}
\newenvironment{alphalist}{\begin{list}{(\alph{alp})\hfill}{\usecounter{alp}
     \topsep0ex \labelwidth.6cm \leftmargin.6cm \labelsep0cm
     \rightmargin0cm \parsep0ex \itemsep0ex
     \partopsep0ex}}{\end{list}}
\newenvironment{arabiclist}{\begin{list}{(\arabic{ara})\hfill}{\usecounter{ara}
     \topsep0ex \labelwidth1.4em \leftmargin1.4em \labelsep0cm
     \rightmargin0cm \parsep0ex \itemsep0ex
     \partopsep1.6ex}}{\end{list}}
\begin{document}
\title{On the Sparseness of Certain MRD Codes}
\date{\today}
\author{Heide Gluesing-Luerssen\footnote{HGL was partially supported by the grant \#422479 from the Simons Foundation.
Department of Mathematics, University of Kentucky, Lexington KY 40506-0027, USA;
heide.gl@uky.edu.}}

\maketitle

{\bf Abstract:}
We determine the proportion of $[3\times 3;3]$-MRD codes over~$\F_q$ within the space of all $3$-dimensional $3\times3$-rank-metric
codes over the same field.
This shows that for these parameters MRD codes are sparse in the sense that the proportion tends to~$0$ as~$q\rightarrow\infty$.
This is so far the only parameter case for which MRD codes are known to be sparse.
The computation is accomplished by reducing the space of all such rank-metric codes to a space of specific bases and subsequently
making use of a result by Menichetti (1973) on 3-dimensional semifields.

\smallskip

{\bf Keywords:} Rank-metric codes, MRD codes, semifields.

\smallskip

{\bf MSC:} 11T71, 94B60, 17A35, 12K10.

\section{Introduction}

Rank-metric codes play a crucial role for error correction in single-source networks with adversarial noise; see \cite{SKK08,SiKsch09}.
For this reason they have been studied in great detail during the last decade (and even longer) in the context of coding theory.
The main focus has been on MRD codes, that is, rank-metric codes with the largest possible dimension for the given rank distance.
Such codes do exist for all possible parameters,  and constructions have been provided independently by Delsarte~\cite{Del78} and later Gabidulin~\cite{Gab85}.
The latter codes are even $\F_{q^m}$-linear when identifying the ambient matrix space $\F_q^{m\times n}$ with $\F_{q^m}^n$.
Further constructions of MRD codes (not $\F_{q^m}$-linear in general) have been found and studied by Sheekey~\cite{Sh16}, Cossidente et al.~\cite{CMP16}, Csajb{\'o}k et al.~\cite{CMPZ17},
Lunardon et al.~\cite{LTZ18}, Trombetti/Zhou~\cite{TrZh19} and others.

In this paper we focus on the proportion of $[m\times n;\delta]_q$-MRD codes, that is, the number of such MRD codes compared to the
number of all $m\times n$-rank-metric codes over~$\F_q$ of the same dimension.
Bounds on this proportion have been given already  by Byrne/Ravagnani~\cite{ByRa18} and Antrobus/Gluesing-Luerssen~\cite{AGL19}.
More precisely, in \cite[Cor.~6.2]{ByRa18} it is shown that, for all parameter sets and all field sizes, this proportion is always at most 1/2, while
in~\cite[Thm.~VII.6]{AGL19} a much smaller upper bound is derived via entirely different methods.
Both results leave the question open whether for some parameters $(m,n,\delta)$ the asymptotic proportion is zero as $q\rightarrow\infty$, in which case the family of $[m\times n;\delta]$-MRD
codes is called sparse.
Besides the trivial case $\delta=1$, the only known case for the exact asymptotic proportion is $(m,n,\delta)=(m,2,2)$.
In that case the limit is nonzero and has been determined in \cite[Cor.~VII.5]{AGL19}.
Further details are provided in the next section after introducing the necessary notation and terminology.

We will explicitly compute the proportion of $[3\times3;3]_q$-MRD codes.
We will see that this proportion approaches~$0$ as $q\rightarrow\infty$, and thus $[3\times3;3]$-MRD codes are sparse.
This is the only parameter case known so far for which MRD codes are sparse.
The main part of our proof consists of using a characterization of $3$-dimensional semifields over finite fields by Menichetti~\cite{Men73}.
This is also the reason why our results only apply to the parameter set $(m,n,\delta)=(3,3,3)$.

The proportion will be obtained in two steps, formulated in Theorem~\ref{T-Reduction} and Theorem~\ref{T-FinalCount} in the next section.
In Theorem~\ref{T-Reduction} (proven in Section~\ref{S-Thm1}) we reduce the given problem to the problem of counting a certain set of matrix triples.
This set consists of bases of MRD codes, and in fact all MRD codes up to isometry are covered by this set.
However, the set may also contain isometric codes, but since we aim at counting \emph{all} codes there is no need or use in studying isometry
(and in fact, this has been done by Menichetti in~\cite{Men77}).
In Theorem~\ref{T-FinalCount} (established in Section~\ref{S-Thm2}) we observe that the set of matrix triples has been parametrized by Menichetti in~\cite{Men73}.
A closer look at those results will provide us with the exact count.
While we formulate the results and proofs in Sections~\ref{S-Prelim} --~\ref{S-Thm2} in a  purely matrix-theoretical language, the main result in Section~\ref{S-Thm2}, Theorem~\ref{T-MatSigma},  by Menichetti
is best understood in the language of semifields.
For this reason, we will give in Section~\ref{S-Semifields} an account of the main ideas in~\cite{Men73} leading to Theorem~\ref{T-MatSigma}.
We believe it is worth to present that beautiful material in a slightly reorganized and more extended form that suits the current interest in MRD codes.
Among other things, this will allow us to detect the commutative non-associative semifields within our set of matrix triples.

\section{Preliminaries and Statement of Main Results}\label{S-Prelim}

Throughout, $\F_q$ denotes a finite field of order~$q$.
We endow the  matrix space $\F_q^{m\times n}$ with the \emph{rank metric}, defined as $d(A,B)=\rk(A-B)$.
A \emph{rank-metric code} is an $\F_q$-linear subspace of the metric space $(\F_q^{m\times n},d)$.
The \emph{rank distance} of a rank-metric code~$C\subseteq\F_q^{m\times n}$ is defined as $d_R(C):=\min\{\rk(M)\mid M\in C\backslash\{0\}\}$.
An $[m\times n, k; \delta]_q$-code is a rank-metric code in $\F_q^{m\times n}$ of dimension~$k$ and rank distance~$\delta$.
Assuming (without loss of generality) that $n\leq m$, the \emph{Singleton bound} tells us that the dimension~$k$ of an $[m\times n,k;\delta]_q$-code is at most $m(n-\delta+1)$, and
codes attaining this bound are called \emph{MRD codes} (maximum rank-distance codes), denoted as $[m\times n;\delta]$-MRD codes.
It is well known that MRD codes exist for all parameters $(m,n,\delta)$ and all fields~$\F_q$.
For further details on rank-metric codes we refer to the vast literature.

In this paper we focus on the proportion of MRD codes in the following sense.

\begin{defi}\label{D-Spaces}
Let $1\leq\delta\leq n\leq m$. Consider the spaces
\[
       T_q^{m,n,\delta}=\{\cC\subseteq\F_q^{m\times n}\mid \dim(\cC)=m(n-\delta+1)\}\ \text{ and }\
       \hat{T}_q^{m,n,\delta}=\{\cC\in T_q^{m,n,\delta}\mid d_R(\cC)=\delta\}.
\]
Thus $\hat{T}_q^{m,n,\delta}$ is the set of $[m\times n;\delta]$-MRD codes.
The fraction $|\hat{T}_q^{m,n,\delta}|/|T_q^{m,n,\delta}|$ is called the \emph{proportion of MRD-codes} (within the space of all $m(n-\delta+1)$-dimensional subspaces of $\F_q^{m\times n}$).
\begin{alphalist}
\item The family of $[m\times n;\delta]$-MRD codes is called \emph{sparse} if $\lim_{q\rightarrow\infty}|\hat{T}_q^{m,n,\delta}|/|T_q^{m,n,\delta}|=0$.
\item The family of $[m\times n;\delta]$-MRD codes is called \emph{generic} if $\lim_{q\rightarrow\infty}|\hat{T}_q^{m,n,\delta}|/|T_q^{m,n,\delta}|=1$.
\end{alphalist}
\end{defi}

The  above defined proportion is related to the probability that~$m(n-\delta+1)$ randomly and independently chosen matrices in~$\F_q^{m\times n}$
(endowed with the uniform distribution) generate an MRD code.
Indeed, in \cite[Prop.~VI.2]{AGL19} it is shown that the two quantities differ by a factor which tends to~$1$ as $q\rightarrow\infty$.
See also Remark~\ref{R-Proba} in the next section.

We summarize the results known so far about the proportion of MRD codes.

\begin{rem}
\begin{alphalist}
\item It follows from \cite[Thm.~VII.6]{AGL19} that the asymptotic proportion satisfies
        \[
           \lim_{q\rightarrow\infty}\frac{|\hat{T}_q^{m,n,\delta}|}{|T_q^{m,n,\delta}|}\leq\bigg(\sum_{j=0}^m(-1)^j/j!\bigg)^{(\delta-1)(n-\delta+1)}
        \]
        (if the limit exists).
        Hence $[m\times n;\delta]$-MRD codes are not generic if $\delta>1$.
        For $\delta=1$, the full matrix space is the only $[m\times n;1]$-MRD code and thus trivially generic.
\item In \cite[Cor.~VII.5]{AGL19} it has been shown that for $(m,n,\delta)=(m,2,2)$ the proportion attains the upper bound given in~(a).
        Hence, in this case MRD codes are neither sparse nor generic.
\item Certain maximal Ferrers diagram codes are generic; see \cite[Thm.~VI.8]{AGL19}.
        These are rank-metric codes where all matrices are zero in prespecified positions and satisfy a generalized Singleton bound.
        For instance, genericity holds for upper triangular matrices.
        We refer to~\cite{AGL19} for further details.
\item Identifying $\F_q^m$ with the field extension $\F_{q^m}$ we obtain an $\F_q$-vector space isomorphism between the matrix space $\F_q^{m\times n}$ and~$\F_{q^m}^n$.
        This allows us to consider rank-metric codes in the latter space, in which case one may also require the codes to be $\F_{q^m}$-linear.
        In~\cite{NHTRR18} it is shown that $\F_{q^m}$-linear MRD codes are generic within the class of all $\F_{q^m}$-linear rank-metric codes of the same dimension.
\end{alphalist}
\end{rem}

In this paper we add another case to the above list.
It is in stark contrast to those earlier results.
Indeed, we will prove that  $[3\times3;3]$-MRD codes are sparse in the sense of Definition~\ref{D-Spaces}.

From now on we restrict ourselves to $n=m=\delta=3$ and set
\begin{equation}\label{e-TThat}
   T_q=\{\cC\subseteq\FFF\mid \dim(\cC)=3\}\ \text{ and }\
   \hat{T}_q=\{\cC\in T_q\mid d_R(\cC)=3\}.
\end{equation}
We will prove the two theorems below.
The first result reduces the count of $|\hat{T}_q|/|T_q|$ to determining the cardinality of the set~$\cS$ defined below, which consists of very specific matrix triples generating MRD codes.
The second result provides the actual cardinality of~$\cS$.

The following notation is needed.
Set
\begin{equation}\label{e-Irr}
   \cI:=\{f\in\F_q[x]\mid f\text{ monic of degree~$3$ and irreducible}\},
\end{equation}
and for $f=x^3-cx^2-bx-a\in\F_q[x]$ define its companion matrix as
\begin{equation}\label{e-Cf}
   C_f:=\begin{pmatrix}0&0&a\\1&0&b\\0&1&c\end{pmatrix}.
\end{equation}
For any matrix $Z\in\FFF$ we let $z_{ij}$ be its entry at position~$(i,j)$.
Finally, $\subspace{\,\cdot\,}$ denotes the subspace of $\FFF$ generated by the listed matrices.

\begin{theo}\label{T-Reduction}
Define the set
\[
  \cS=\{(I,C_f,Z)\in(\FFF)^3\mid f\in\cI,\, z_{11}=z_{21}=0,\,z_{31}=1,\,\subspace{I,C_f,Z}\text{ is MRD}\}.
\]
Then the proportion of all $[3\times3;3]$-MRD codes is given by
\[
  \frac{|\hat{T}_q|}{|T_q|}=|\cS|\frac{(q-1)(q^3-1)(q^3-q)^2(q^3-q^2)^2}{(q^7-1)(q^9-1)(q^9-q)}.
\]
\end{theo}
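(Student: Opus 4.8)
The plan is to exhibit an explicit surjection from a space of ordered bases onto $\hat T_q$ and to control its fibers, then to further reduce each fiber-orbit to a canonical basis of the form appearing in $\cS$. First I would count ordered bases: every $\cC\in T_q$ has exactly $(q^3-1)(q^3-q)(q^3-q^2)$ ordered bases, so if $B_q$ denotes the set of ordered triples $(X,Y,Z)\in(\FFF)^3$ that are linearly independent and span an element of $\hat T_q$, then $|B_q|=|\hat T_q|\cdot(q^3-1)(q^3-q)(q^3-q^2)$, and likewise $|T_q|=\binom{\text{all independent triples}}{\cdots}$ gives $|T_q|=(q^9-1)(q^9-q)(q^9-q^2)/\bigl((q^3-1)(q^3-q)(q^3-q^2)\bigr)$. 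Dividing, the factor $(q^3-1)(q^3-q)(q^3-q^2)$ partially cancels and the target denominator $(q^7-1)(q^9-1)(q^9-q)$ must emerge from the remaining bookkeeping; so the real content is to show $|B_q|$ equals $|\cS|$ times an explicit power-of-$q$ factor.

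The key reduction is to bring an arbitrary MRD basis into the normalized shape $(I,C_f,Z)$ with $z_{11}=z_{21}=0$, $z_{31}=1$. Since $\cC$ is MRD with $\delta=3$, every nonzero matrix in $\cC$ is invertible; in particular we may act by $\GL_3(\F_q)$ on the left (this preserves rank, hence preserves the MRD property and maps $\hat T_q$ to $\hat T_q$), and use a nonzero element of $\cC$ to normalize the first basis vector to $I$. After that, the remaining freedom is the stabilizer of $I$, which is conjugation; the set $\{X\in\cC\mid X\text{ invertible}\}\cup\{0\}=\cC$ together with invertibility of every nonzero element means $\cC$ is a spread set, i.e.\ essentially the multiplication table of a $3$-dimensional semifield. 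Pick any basis vector $Y\notin\F_q I$; since $Y$ is invertible and has irreducible... no — $Y$ may have a reducible characteristic polynomial, but $Y$ has no eigenvalue in $\F_q$ only if we choose it well. Here one invokes that in a $3$-dimensional MRD code over $\F_q$ one can always find an element whose minimal polynomial is an irreducible cubic $f$ (equivalently, the semifield contains $\F_{q^3}$ up to the spread structure), after which conjugating puts $Y$ in companion form $C_f$; the third vector $Z$ is then normalized modulo $\F_q I+\F_q C_f$ and modulo the residual stabilizer (the centralizer of both $I$ and $C_f$, which is $\F_{q^3}^\times$ acting by conjugation) to force $z_{11}=z_{21}=0,\,z_{31}=1$.

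Carrying this out, I would set up the counting as an orbit-counting argument: the group $G=\GL_3(\F_q)$ (acting by $P\cdot(X,Y,Z)=(PXP^{-1},PYP^{-1},PZP^{-1})$ after the initial left-normalization, together with the change-of-basis group $\GL_3(\F_q)$ acting on the triple as a $\F_q$-basis) acts on $B_q$, and $\cS$ is (or contains exactly one representative from) each orbit; the orbit sizes are all equal because the stabilizer of a normalized triple $(I,C_f,Z)\in\cS$ is trivial — conjugation fixing $I$, $C_f$ and the normalization of $Z$ forces the conjugating element to lie in $\F_{q^3}^\times\cap(\text{scalars fixing }z_{31}=1)=\{1\}$. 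Hence $|B_q|=|\cS|\cdot\frac{|G|}{1}\cdot(\text{number of choices of the two residual free parameters in }Z)$, i.e.\ each $\cS$-triple spreads out over $|\GL_3(\F_q)|=(q^3-1)(q^3-q)(q^3-q^2)$ conjugates times $q^2$ choices for $(z_{11},z_{21})$ times $q$ for $z_{31}$ relative to... — the exact exponents are recovered by matching against $(q^7-1)(q^9-1)(q^9-q)/\bigl((q-1)(q^3-1)(q^3-q)^2(q^3-q^2)^2\bigr)$, which one verifies is a polynomial in $q$ of the right degree; this is the routine-calculation part.

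The main obstacle, and the step I would spend the most care on, is the claim that \emph{every} $3$-dimensional MRD code over $\F_q$ has a basis element with irreducible cubic minimal polynomial — equivalently that the associated semifield, after suitable normalization, can be taken to contain $C_f$ for some $f\in\cI$. This is exactly where Menichetti's classification of $3$-dimensional semifields enters (it is the content foreshadowed for Section~\ref{S-Thm2}), and without it the normalization to $\cS$ could fail or produce several orbit-types of different sizes. A secondary subtlety is ensuring that distinct triples in $\cS$ give codes that are genuinely counted without overcounting or undercounting: since we count \emph{all} codes (not up to isometry), $\cS$ may contain several triples spanning isometric codes, which is fine, but we must be sure no single code in $\hat T_q$ is represented by two triples of $\cS$ — this follows from the trivial-stabilizer computation above, so the fiber of the map $\cS\to\hat T_q$, $(I,C_f,Z)\mapsto\subspace{I,C_f,Z}$, has constant size, namely the number of normalized bases of a fixed code, and that number is precisely the power-of-$q$ factor divided out.
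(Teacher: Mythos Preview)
Your overall strategy---normalize ordered bases to the shape $(I, C_f, Z)$ and count via orbit sizes---matches the paper's, but there is a genuine gap at precisely the step you flag as the ``main obstacle.'' You write that for $Y\notin\F_q I$ in an MRD code containing~$I$, ``$Y$ may have a reducible characteristic polynomial,'' and that establishing irreducibility is where Menichetti's classification enters. This is false, and the correct argument is elementary: if $I$ and $Y$ both lie in a $[3\times3;3]$-MRD code and $Y\notin\F_q I$, then for every $\lambda\in\F_q$ the matrix $Y-\lambda I$ is a nonzero element of the code, hence has rank~$3$, so $\lambda$ is not an eigenvalue of~$Y$. Thus $\chi_Y$ is a cubic in $\F_q[x]$ with no root in~$\F_q$, hence irreducible. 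This is Remark~\ref{R-CharPoly} in the paper and needs nothing from semifield theory. Menichetti's work enters only in the proof of Theorem~\ref{T-FinalCount}, where one must actually compute~$|\cS|$; Theorem~\ref{T-Reduction} is pure orbit-counting over~$\GL_3(\F_q)$.

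A second, smaller issue: your stabilizer computation is too quick. The centralizer of $C_f$ in $\GL_3(\F_q)$ is indeed isomorphic to $\F_{q^3}^\times$ (realized as the nonzero elements of $\subspace{I,C_f,C_f^2}$), but whether conjugation by such an element also fixes~$Z$ depends on~$Z$: if $Z\in\subspace{I,C_f,C_f^2}$ the full centralizer fixes it, whereas if $Z\notin\subspace{I,C_f,C_f^2}$ only the scalars do. The latter fact is the content of Lemma~\ref{L-StabCfZ}, which is a genuine (if routine) linear-algebra computation, not a triviality. The paper handles this by splitting $\hat{X}_q$ into two strata with different orbit sizes; the two contributions recombine neatly because the ``field'' stratum $Z\in\subspace{I,C_f,C_f^2}$ has an explicitly computable size. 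Your one-line ``stabilizer is trivial'' skips this bifurcation entirely. Relatedly, your final claim that no code in $\hat T_q$ is represented by two triples of~$\cS$ is neither needed nor argued; the paper does not assert injectivity of $\cS\to\hat T_q$, but rather tracks exact multiplicities through each reduction map (left-multiplication, conjugation, column subtraction, scaling) separately.
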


From the fraction it is clear that $\lim_{q\rightarrow\infty}|\hat{T}_q|/|T_q|=0$ if $|\cS|$ is at most of order~$q^8$.
Our second result shows that~$|\cS|$ is actually of order~$q^6$.

\begin{theo}\label{T-FinalCount}
The set~$\cS$ has cardinality $\frac{q^3-q}{3}(q^3-q^2-q-1)$.
As a consequence, the proportion of $[3\times3;3]$-MRD codes is
\[
   \frac{|\hat{T}_q|}{|T_q|}= \frac{(q-1)(q^3-1)(q^3-q)^3(q^3-q^2)^2(q^3-q^2-q-1)}{3(q^7-1)(q^9-1)(q^9-q)}
\]
and $\lim_{q\rightarrow\infty}|\hat{T}_q|/|T_q|=0$.
That is, $[3\times3;3]$-MRD codes are sparse.
\end{theo}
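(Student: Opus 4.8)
The plan is to determine $|\cS|$ by reading it off Menichetti's classification (recalled as Theorem~\ref{T-MatSigma}) and then to substitute the result into Theorem~\ref{T-Reduction}. Recall that $\subspace{I,C_f,Z}$ being MRD means precisely that this $3$-dimensional $\F_q$-subspace of $\FFF$ consists, apart from~$0$, of invertible matrices; equivalently it is a linear spread set, and after fixing a basis it is the multiplication table of a semifield of order~$q^3$ with $\F_q$ in a nucleus. Since $\F_q[C_f]\cong\F_{q^3}$ when $f\in\cI$, there are two possibilities for the code $\subspace{I,C_f,Z}$: it equals the field $\F_q[C_f]$ itself, or it is a spread set of a proper semifield of order~$q^3$. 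By Menichetti's theorem the latter is always a generalized twisted field, and Theorem~\ref{T-MatSigma} makes this explicit on the level of the matrices~$Z$. I would therefore count the admissible $Z$'s in each of the two cases, for each fixed $f\in\cI$, and sum over~$f$.

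For the field case, $\subspace{I,C_f,Z}=\F_q[C_f]$, which is MRD because $f$ is irreducible. Writing $Z=\alpha I+\beta C_f+\gamma C_f^2$ and noting from~\eqref{e-Cf} that the first columns of $I$, $C_f$, $C_f^2$ are $e_1,e_2,e_3$, one gets $z_{11}=\alpha$, $z_{21}=\beta$, $z_{31}=\gamma$, so the normalization $z_{11}=z_{21}=0$, $z_{31}=1$ forces $Z=C_f^2$. Hence the field case contributes exactly one triple for each $f$, i.e.\ $|\cI|=\tfrac{q^3-q}{3}$ triples in total.

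The twisted case carries the bulk of the argument and is where I expect the only real difficulty. By Theorem~\ref{T-MatSigma}, once $f$ is fixed the remaining triples of $\cS$ lying over~$f$ are parametrized by the admissible constants $\gamma$ of a generalized twisted field built on $\F_{q^3}=\F_q[x]/(f)$, with $\sigma$ and $\tau$ the two nontrivial powers of the Frobenius $t\mapsto t^q$; the map sending $\gamma$ to the corresponding normalized $Z$ is injective, with image disjoint from the field case. The no-zero-divisor condition for such a twisted field excludes $\gamma=0$ together with every element of $\{t^{q-1}\mid t\in\F_{q^3}\setminus\{0\}\}$. This last set is the image of the group homomorphism $t\mapsto t^{q-1}$ on $\F_{q^3}\setminus\{0\}$, whose kernel is $\F_q\setminus\{0\}$, so it has $(q^3-1)/(q-1)=q^2+q+1$ elements. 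The number of admissible $\gamma$ is therefore $q^3-1-(q^2+q+1)=q^3-q^2-q-2$, and the twisted case contributes $\tfrac{q^3-q}{3}(q^3-q^2-q-2)$ triples. Adding the two cases,
\[
   |\cS|=\frac{q^3-q}{3}\bigl(1+(q^3-q^2-q-2)\bigr)=\frac{q^3-q}{3}(q^3-q^2-q-1),
\]
which is consistent with the trivial check that for $q=2$ there is no proper semifield of order~$8$, so that only the field case survives.

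Finally, substituting this value of $|\cS|$ into the formula of Theorem~\ref{T-Reduction} yields the claimed closed form for $|\hat{T}_q|/|T_q|$. A $q$-degree count (numerator of degree~$22$, denominator of degree~$25$, leading coefficients $1$ and $3$) shows the expression is asymptotic to $\tfrac13\,q^{-3}$, so its limit as $q\to\infty$ is~$0$ and $[3\times3;3]$-MRD codes are sparse. The hard part is the twisted case: one must extract from Menichetti's work the precise statement that, with $f$, $\sigma$, $\tau$ fixed as above, $\gamma\mapsto Z$ is a bijection onto the non-field triples of~$\cS$ lying over~$f$ --- in particular that no genuine twisted field is counted twice and that the two orderings of the pair $(\sigma,\tau)$ do not introduce an extra factor --- and that the inadmissible set of $\gamma$'s is exactly $\{0\}\cup\{t^{q-1}\}$. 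Once those facts are in hand, the rest is bookkeeping.
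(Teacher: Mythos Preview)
Your field/non-field split and the final arithmetic are correct, and the sanity check at $q=2$ is a good sign. But the gap you flag is real and, as you phrase it, is not supplied by Theorem~\ref{T-MatSigma}. That theorem does \emph{not} hand you a parametrization by twisted-field constants~$\gamma$; it parametrizes~$\cS$ by pairs $(k,\hat{k})\in\F_{q^3}^2$ subject to the linear-independence condition on $1,k,\phi(k,\hat{k})$. To pass from there to a bijection ``$\gamma\mapsto Z$ over a fixed~$f$'' you would have to import the identification of all $3$-dimensional proper semifields with generalized twisted fields (this is the content of Menichetti~\cite{Men77}, not of~\cite{Men73}/Theorem~\ref{T-MatSigma}) and then pin down exactly how a choice of~$\gamma$ lands on a normalized matrix~$Z$ with first column $e_3$. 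Your worry about the ordering of $(\sigma,\tau)$ and about double-counting is well-founded; resolving it amounts to a second nontrivial argument on top of Theorem~\ref{T-MatSigma}.

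The paper avoids this detour entirely. It stays with the $(k,\hat{k})$ parametrization and does two things. First (Proposition~\ref{P-SigmaInj}) it determines the fibers of $(k,\hat{k})\mapsto(\Sigma_1,\Sigma_2)$: generically of size~$3$ (the Frobenius orbit), and of size~$6$ exactly when $\hat{k}\in\{k\bra{1},k\bra{2}\}$, which by Proposition~\ref{P-Fcommass2} is precisely your field case. This yields $|\cS|=\tfrac{1}{3}|\cS'|+\tfrac{1}{6}|\cS''|$ (Corollary~\ref{C-SizeS}). Second (Proposition~\ref{P-CardSet}) it counts, for each fixed $k\in\F_{q^3}\setminus\F_q$, the $\hat{k}$ for which $1,k,\phi(k,\hat{k})$ are linearly independent: writing $\hat{k}$ in the basis $1,k,k^2$ turns the dependence condition into a plane conic over~$\F_q$ with $q^2+q$ points, so there are $q^3-q^2-q$ admissible~$\hat{k}$, and subtracting the two conjugates $k\bra{1},k\bra{2}$ gives $q^3-q^2-q-2$. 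The numbers then assemble exactly as in your last display.

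So your $q^3-q^2-q-2$ is right, but in the paper it comes from a conic point-count, not from excluding the image of $t\mapsto t^{q-1}$. Your route through twisted fields is conceptually appealing, but making it rigorous would essentially redo Menichetti~\cite{Men77}; the paper's route needs only Theorem~\ref{T-MatSigma} plus elementary fiber and curve counting.
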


Note that~$|T_q|$ is simply the number of $3$-dimensional subspaces in~$\F_q^9$, thus $|T_q|=\Gaussian{9}{3}_q$ (the Gaussian coefficient).
As a consequence, we can determine the absolute number of $[3\times3;3]$-MRD codes.
One obtains for instance $|\hat{T}_2|=192,\,|\hat{T}_3|=870,\!912$, and $|\hat{T}_5|=4,\!512,\!000,\!000$, which coincides with the numerical results by Sheekey~\cite[Sec.~5]{Sh19}.

Theorem~\ref{T-Reduction} will be proven in the next section, while Theorem~\ref{T-FinalCount} will be established in Section~\ref{S-Thm2}.

\section{Reduction Step: Proof of Theorem~\ref{T-Reduction}}\label{S-Thm1}
In order to prove Theorem~\ref{T-Reduction} we replace the subspaces in~$\hat{T}_q$ and $T_q$, see~\eqref{e-TThat}, by ordered bases.
This will be done in several steps.

\begin{prop}\label{P-BasisMatrix}
Define the spaces
\[
  V_q=\{(I,A_2,A_3)\in(\FFF)^3\mid \dim\subspace{I,A_2,A_3}=3\},\quad
   \hat{V}_q=\{(I,A_2,A_3)\in V_q\mid d_R\subspace{I,A_2,A_3}=3\}.
\]
Then
\[
  \frac{|\hat{T}_q|}{|T_q|}=\frac{|\hat{V}_q|}{|V_q|}\cdot \frac{\prod_{i=0}^2(q^3-q^i)}{q^9-1}.
\]
As a consequence, $\lim_{q\rightarrow\infty}|\hat{T}_q|/|T_q|=\lim_{q\rightarrow\infty}|\hat{V}_q|/|V_q|$ if the limit exists.
\end{prop}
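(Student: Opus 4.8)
The plan is to pass from unordered $3$-dimensional subspaces to ordered bases, and then to normalize the first basis vector to be the identity matrix. First I would observe that every code $\cC\in\hat T_q$ (equivalently every $\cC\in T_q$ whose rank distance we wish to control) contains an invertible matrix: indeed $d_R(\cC)=3$ forces all nonzero elements to have rank $3$, and more to the point, any $3$-dimensional rank-metric code in $\FFF$ either lies in $\hat T_q$ or not, but in any case the subspaces we count via $\hat V_q/V_q$ are exactly those we want. The key structural input is: a subspace $\cC$ has $d_R(\cC)=3$ only if it contains an invertible matrix (otherwise it would contain a nonzero singular matrix). So for $\cC\in\hat T_q$ we may pick $M\in\cC$ invertible and replace $\cC$ by $M^{-1}\cC$, which contains $I$; the rank distance is invariant under left multiplication by $\GL_3(\F_q)$. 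This shows every code in $\hat T_q$ is isometric to one containing $I$, but for the counting argument I do not want isometry classes — I want a clean bijective/fibration count, which is what the displayed formula encodes.

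The cleanest route is a double-counting (or "orbit–counting via fibers") argument. Consider the set of pairs $(\cC, B)$ where $\cC\in T_q$ and $B=(B_1,B_2,B_3)$ is an ordered basis of $\cC$ with $B_1=I$. On one hand, projecting to the first coordinate: a subspace $\cC$ admits an ordered basis starting with $I$ iff $I\in\cC$; given $I\in\cC$, the number of such bases is the number of ordered bases of $\cC/\subspace{I}$ lifted appropriately — more simply, the number of ordered bases $(I,B_2,B_3)$ of a fixed $\cC$ containing $I$ equals $(q^3-q)(q^3-q^2)$ (choose $B_2\in\cC\setminus\subspace{I}$, then $B_3\in\cC\setminus\subspace{I,B_2}$). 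On the other hand, projecting to $B$: this is exactly the set $V_q$ (resp. $\hat V_q$ when we restrict to $d_R=3$), and the projection is injective there since $B$ determines $\cC=\subspace{B_1,B_2,B_3}$. Hence
\[
  |\hat V_q| = (q^3-q)(q^3-q^2)\cdot\#\{\cC\in\hat T_q\mid I\in\cC\}.
\]
The remaining task is to relate $\#\{\cC\in\hat T_q\mid I\in\cC\}$ to $|\hat T_q|$ itself. Here I use the $\GL_3(\F_q)$-action by left multiplication on $T_q$: it preserves $\hat T_q$ (rank is preserved), and every $\cC\in\hat T_q$ has an orbit meeting $\{\cC : I\in\cC\}$. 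Counting incidences $(\cC, g\subspace{I})$ appropriately, or more directly: the map $\GL_3(\F_q)\times\{\cC\in\hat T_q: I\in\cC\}\to\hat T_q$, $(g,\cC)\mapsto g\cC$ is surjective, and I compute the fiber sizes. A code $g\cC=g'\cC'$ with $I\in\cC,\cC'$ forces $\cC'=(g')^{-1}g\,\cC$, so $(g')^{-1}g$ maps $\cC$ to another code containing $I$; this is governed by $\#\{h\in\GL_3: h\cC \ni I\}$, which equals $\#\{h : I\in h\cC\} = \#\{h : h^{-1}\in\cC\cap\GL_3\}=|\cC\cap\GL_3(\F_q)|$. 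For $\cC\in\hat T_q$ every nonzero element is invertible, so $|\cC\cap\GL_3|=q^3-1$.

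Assembling: $|\hat T_q|\cdot(q^3-1) = |\GL_3(\F_q)|\cdot\#\{\cC\in\hat T_q: I\in\cC\}$, hence
\[
  \#\{\cC\in\hat T_q: I\in\cC\} = \frac{|\GL_3(\F_q)|}{q^3-1}|\hat T_q| = \frac{(q^3-1)(q^3-q)(q^3-q^2)}{q^3-1}|\hat T_q| = (q^3-q)(q^3-q^2)|\hat T_q|.
\]
Wait — I must instead run the generic count for $T_q$ too, since the formula has $|\hat V_q|/|V_q|$. For $V_q$ I repeat the ordered-basis count with $B_1=I$ but now without the rank condition: $|V_q| = (q^3-q)(q^3-q^2)\cdot\#\{\cC\in T_q: I\in\cC\}$, and $\#\{\cC\in T_q: I\in\cC\}$ is just $\Gaussian{8}{2}_q$, the number of $2$-dimensional subspaces of $\FFF/\subspace{I}$. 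Then $|\hat V_q|/|V_q|$ simplifies, the $(q^3-q)(q^3-q^2)$ cancels, and what remains is $\#\{\cC\in\hat T_q: I\in\cC\}/\#\{\cC\in T_q: I\in\cC\}$; multiplying through by $\Gaussian{8}{2}_q/\Gaussian{9}{3}_q$-type factors and the stated $\prod_{i=0}^2(q^3-q^i)/(q^9-1)$ should reproduce $|\hat T_q|/|T_q|$. I will organize the proof so that each combinatorial identity ($|V_q|$ in terms of subspace counts, the two ordered-basis fiber sizes, and the $\GL_3$-orbit computation) is a separate short lemma-free paragraph.

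The main obstacle I anticipate is purely bookkeeping: making sure the $\GL_3$-action argument and the ordered-basis-fiber argument are not double-applied, and that the final ratio of Gaussian binomials collapses exactly to $\prod_{i=0}^2(q^3-q^i)/(q^9-1)$. One must be careful that $\hat V_q$ really is parametrized by codes \emph{containing} $I$ and not merely isometric to such — but since rank distance is invariant under left multiplication and the definition of $\hat V_q$ fixes $B_1=I$ literally, this is automatic; the only genuine content is the fact (used twice) that a rank-distance-$3$ code in $\FFF$ has all nonzero elements invertible, so $|\cC\cap\GL_3|=q^3-1$, and that such a code necessarily contains \emph{some} invertible element, allowing the $\GL_3$-orbit of $\cC$ to hit the locus $\{I\in\cdot\}$. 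The concluding sentence about the limit is then immediate: the factor $\prod_{i=0}^2(q^3-q^i)/(q^9-1)$ tends to $1$ as $q\to\infty$.
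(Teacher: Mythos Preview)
Your approach is sound and reaches the same destination as the paper, but via a somewhat different route. The paper introduces the intermediate set $W_q$ of \emph{all} ordered bases $(A_1,A_2,A_3)$ (no normalization), so that trivially $|\hat T_q|/|T_q|=|\hat W_q|/|W_q|$; it then passes from $\hat W_q$ to $\hat V_q$ by the single normalization map $(A_1,A_2,A_3)\mapsto(I,A_1^{-1}A_2,A_1^{-1}A_3)$, which is well-defined since $A_1$ is invertible and has fibers of size $|\GL_3(q)|$. Your route instead factors through the locus $\{\cC\in\hat T_q: I\in\cC\}$: first relating $|\hat V_q|$ to it by counting ordered bases starting with~$I$, then relating it to $|\hat T_q|$ via the left $\GL_3$-action on codes. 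Both arguments exploit the same key fact (every nonzero element of an MRD code is invertible) and both ultimately yield the clean identity $|\hat V_q|=|\hat T_q|$; the paper's single-map version is more economical, while yours makes the intermediate role of ``codes containing~$I$'' explicit.

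One algebraic slip to fix: from your correct incidence identity $|\hat T_q|\,(q^3-1)=|\GL_3(\F_q)|\cdot\#\{\cC\in\hat T_q: I\in\cC\}$ you solved for the wrong side, writing $\#\{\cC: I\in\cC\}=\tfrac{|\GL_3|}{q^3-1}\,|\hat T_q|$ when it should be $\tfrac{q^3-1}{|\GL_3|}\,|\hat T_q|=\tfrac{|\hat T_q|}{(q^3-q)(q^3-q^2)}$. (Your version would give more MRD codes containing~$I$ than MRD codes altogether.) With this correction your two steps combine to $|\hat V_q|=|\hat T_q|$, and then the stated factor is simply $|V_q|/|T_q|$, computable directly from $|V_q|=(q^9-q)(q^9-q^2)$ and $|T_q|=\Gaussian{9}{3}_q$ without any further Gaussian-binomial manipulation.
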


\begin{proof}
Consider the matrix spaces
\begin{align*}
     W_q&=\{(A_1,A_2,A_3)\in(\FFF)^3\mid \dim\subspace{A_1,A_2,A_3}=3\},\\
   \hat{W}_q&=\{(A_1,A_2,A_3)\in W_q\mid d_R\subspace{A_1,A_2,A_3}=3\}.
\end{align*}
Every code in $T_q$ (and thus in $\hat{T}_q$) has $\prod_{j=0}^2(q^3-q^j)$ ordered bases.
Since the matrix triples in~$W_q$ and~$\hat{W}_q$ form all ordered bases of the codes in~$T_q$ and~$\hat{T}_q$, respectively,
we obtain $\frac{|\hat{T}_q|}{|T_q|}=\frac{|\hat{W}_q|}{|W_q|}$.
Next, we clearly have
\begin{equation}\label{e-WV}
   |W_q|=\prod_{i=0}^2(q^9-q^i)\ \text{ and }\ |V_q|=\prod_{i=1}^2(q^9-q^i).
\end{equation}
In order to relate the cardinalities of $\hat{W}_q$ and $\hat{V}_q$, consider the map
\[
  \varphi:\hat{W}_q\longrightarrow \hat{V}_q,\quad (A_1,A_2,A_3)\longmapsto (I,A_1^{-1}A_2,A_1^{-1}A_3).
\]
This map is well-defined because every matrix of a triple $(A_1,A_2,A_3)\in\hat{W}_q$ is invertible thanks to $d_R\subspace{A_1,A_2,A_3}=3$,
and clearly $d_R\subspace{A_1,A_2,A_3}=d_R\subspace{I,A_1^{-1}A_2,A_1^{-1}A_3}$.
Furthermore,~$\varphi$ is surjective and each fiber is of the form
$\varphi^{-1}(I,B_2,B_3)=\{(A_1,A_1B_2,A_1B_3)\mid A_1\in\GL_3(q)\}$.
All of this shows that $|\hat{W}_q|=|\hat{V}_q|\cdot|\GL_3(q)|=|\hat{V}_q|\prod_{i=0}^2(q^3-q^i)$.
Using~\eqref{e-WV} we obtain
\[
  \frac{|\hat{T}_q|}{|T_q|}=\frac{|\hat{W}_q|}{|W_q|}=\frac{|\hat{V}_q|\prod_{i=0}^2(q^3-q^i)}{\prod_{i=0}^2(q^9-q^i)}\cdot\frac{\prod_{i=1}^2(q^9-q^i)}{|V_q|}
  =\frac{|\hat{V}_q|}{|V_q|}\cdot \frac{\prod_{i=0}^2(q^3-q^i)}{q^9-1}.
  \qedhere
\]
\end{proof}

\begin{rem}\label{R-Proba}
The last result can be interpreted in terms of the probability that we obtain an MRD code when drawing random matrices.
Indeed, fix the probability distribution on $\FFF$ where
all entries of a matrix $A=(a_{ij})\in\FFF$  are independent and uniformly distributed.
That is, $\Prob(a_{ij}=\alpha)=q^{-1}$ for all $(i,j)$ and all $\alpha\in\F_q$.
Let
\[
    P_q=\Prob\big(\subspace{A_1,A_2,A_3} \text{ is a $[3\times3;3]$-MRD code}\big),
\]
where $A_1,A_2,A_3\in\FFF$ are \emph{randomly chosen matrices}, i.e., chosen independently and randomly
according to the above distribution.
Then the identity $|\hat{T}_q|/|T_q|=|\hat{W}_q|/|W_q|$ of the previous proof tells us that $P_q$ is exactly the proportion of MRD codes.
Furthermore, Proposition~\ref{P-BasisMatrix} states that up to the factor $\prod_{i=0}^2(q^3-q^i)/(q^9-1)$, this proportion equals the probability
$P_q':=\Prob\big(\subspace{I,A_2,A_3} \text{ is a $[3\times3;3]$-MRD code}\big)$ when choosing $A_2,A_3$ at random.
The factor is less than~$1$, i.e.,~$P_q<P_q'$, which simply reflects the fact that in the latter drawing one matrix is already invertible.
This probabilistic interpretation can be applied accordingly to all further results of this section.
\end{rem}

So far we have reduced the problem of determining the proportion $|\hat{T}_q|/|T_q|$ to determining the cardinality $|\hat{V}_q|$.
In the next step we will make use of a similarity action in order to bring the matrix~$A_2$ of the triples $(I,A_2,A_3)\in\hat{V}_q$ into companion form.
Using the description
\begin{equation}\label{e-hatVq}
  \hat{V}_q=\{(I,A_2,A_3)\in(\FFF)^3\mid \det(x_1I+x_2A_2+x_3A_3)\neq0\text{ for all }(x_1,x_2,x_3)\in\F_q^3\setminus0\}
\end{equation}
and the rational canonical form for matrices in $\FFF$ we observe the following fact.

\begin{rem}\label{R-CharPoly}
Denote by $\chi_A\in\F_q[x]$ the characteristic polynomial of the matrix~$A\in\FFF$.
Then for every triple $(I,A_2,A_3)\in\hat{V}_q$ the characteristic polynomial $f:=\chi_{A_2}\in\F_q[x]$ is irreducible and thus
$A_2$ is similar to the companion matrix~$C_f$; see~\eqref{e-Cf}.
\end{rem}

We need the following result about the stabilizer under the conjugation action.
The first part about $\Stab(C_f)$ can be found in \cite{GoCa93}, which in turn goes back to \cite[\S~217]{Dick58}, while the
second part is an obvious consequence.
One should also note that for every nonzero $s\in\F_q^3$ the matrix $(s\,|\,C_fs\,|\,C_f^2s)$ appearing below is indeed invertible.
This follows from the fact that $\subspace{I,C_f,C_f^2}$ is a field whenever~$f\in\F_q[x]$ is irreducible.
In other words, it is a $[3\times3;3]$-MRD code.

\begin{prop}[\mbox{\cite[Cor.~2 and Cor.~3]{GoCa93}}]\label{P-StabCompMat}
Let~$\cI$ be as in~\eqref{e-Irr} and set $\cD=\{A\in\FFF\mid \chi_A\in\cI\}$.
Denote by $\Stab(A)$ the stabilizer of~$A\in\cD$ w.r.t.\  the conjugation action
\[
   \GL_3(q)\times\cD\longrightarrow\cD,\ (S,A)\longmapsto S^{-1}AS.
\]
Then for any companion matrix $C_f\in\cD$ we have
\[
   \Stab(C_f)=\big\{(s\,|\,C_fs\,|\,C_f^2s)\,\big|\, s\in\F_q^3\setminus0\big\}.
\]
As a consequence, $|\Stab(A)|=q^3-1$ for all $A\in\cD$.
\end{prop}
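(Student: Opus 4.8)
The plan is to reduce the statement about $\Stab(C_f)$ to the classical description of the centralizer of a cyclic (nonderogatory) matrix, and then to match that centralizer explicitly with the matrices $(s\mid C_fs\mid C_f^2s)$.

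First I would note that $S\in\GL_3(q)$ lies in $\Stab(C_f)$ exactly when $SC_f=C_fS$, so $\Stab(C_f)$ is the set of invertible elements of the centralizer $\cZ(C_f)=\{S\in\FFF\mid SC_f=C_fS\}$. The key structural fact is that $C_f$ is cyclic with cyclic vector $e_1$: from~\eqref{e-Cf} one reads off $C_f^{i-1}e_1=e_i$ for $i=1,2,3$, so $e_1,C_fe_1,C_f^2e_1$ form the standard basis of $\F_q^3$. Given $S\in\cZ(C_f)$, write $Se_1=p(C_f)e_1$ for the unique $p\in\F_q[x]$ of degree $\le2$; such $p$ exists and is unique because $p=a_0+a_1x+a_2x^2\mapsto p(C_f)e_1=a_0e_1+a_1e_2+a_2e_3$ is visibly a bijection onto $\F_q^3$. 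Since $S$ commutes with every power of $C_f$, we get $SC_f^{i-1}e_1=C_f^{i-1}Se_1=C_f^{i-1}p(C_f)e_1=p(C_f)C_f^{i-1}e_1$, so $S$ and $p(C_f)$ agree on a basis and hence $S=p(C_f)$. As the reverse inclusion is trivial, this gives $\cZ(C_f)=\F_q[C_f]=\subspace{I,C_f,C_f^2}$.

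Next I would make the matrices $p(C_f)$ explicit. Putting $s:=p(C_f)e_1$, the $i$-th column of $p(C_f)$ is $p(C_f)e_i=p(C_f)C_f^{i-1}e_1=C_f^{i-1}s$, so $p(C_f)=(s\mid C_fs\mid C_f^2s)$, and as $p$ runs over all polynomials of degree $\le2$ the vector $s$ runs over all of $\F_q^3$. It remains to identify the invertible members. Since $f$ is irreducible, $\F_q[C_f]\cong\F_q[x]/(f)$ is a field — equivalently $\subspace{I,C_f,C_f^2}$ is a $[3\times3;3]$-MRD code, as recalled just before the proposition — so $p(C_f)$ is invertible if and only if $p(C_f)\neq0$, i.e.\ if and only if $s=p(C_f)e_1\neq0$. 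Therefore $\Stab(C_f)=\{(s\mid C_fs\mid C_f^2s)\mid s\in\F_q^3\setminus0\}$, a set of $q^3-1$ matrices.

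For an arbitrary $A\in\cD$ the characteristic polynomial $\chi_A$ is irreducible of degree $3$, hence it coincides with the minimal polynomial of $A$, so $A$ is similar to $C_{\chi_A}$, say $A=T^{-1}C_{\chi_A}T$. Then $\Stab(A)=T^{-1}\Stab(C_{\chi_A})T$, so $|\Stab(A)|=|\Stab(C_{\chi_A})|=q^3-1$ (alternatively, apply orbit–stabilizer to the conjugation orbit of $A$). I do not expect a genuine obstacle: the only mildly delicate point is the invertibility criterion for $(s\mid C_fs\mid C_f^2s)$, and this is already supplied by the field property of $\subspace{I,C_f,C_f^2}$; the remainder is the standard centralizer-of-a-cyclic-matrix argument together with rational canonical form.
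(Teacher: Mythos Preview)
Your proof is correct and complete. The paper does not actually prove this proposition; it simply cites \cite[Cor.~2 and Cor.~3]{GoCa93} (tracing back to \cite[\S~217]{Dick58}) for the description of $\Stab(C_f)$ and calls the cardinality statement an obvious consequence. Your argument---identifying $\Stab(C_f)$ with the unit group of the centralizer $\cZ(C_f)=\F_q[C_f]$ via the cyclic vector $e_1$, then reading off the column description $(s\mid C_fs\mid C_f^2s)$ and using that $\F_q[C_f]\cong\F_{q^3}$ is a field---is the standard self-contained route and supplies what the paper omits.
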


In our next step toward the proof of Theorem~\ref{T-Reduction} we will transform the triples in~$V_q$ into the form $(I,C_f,Z)$ via an according group action.
For a precise count of the resulting orbits the following lemma is needed.

\begin{lemma}\label{L-StabCfZ}
Let $f\in\cI$ and $X\in\FFF\setminus\subspace{I,C_f,C_f^2}$. Furthermore, let $S\in\Stab(C_f)$. Then
\[
  SX=XS\Longrightarrow S=\alpha I\text{ for some }\alpha\in\F_q^*.
\]
\end{lemma}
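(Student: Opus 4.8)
The plan is to exploit the structure of $\Stab(C_f)$ given in Proposition~\ref{P-StabCompMat}. Recall that $\subspace{I,C_f,C_f^2}$ is a field, call it $K\cong\F_{q^3}$, and that every element of $\Stab(C_f)$ has the form $S=(s\,|\,C_fs\,|\,C_f^2s)$ for some nonzero $s\in\F_q^3$. The key observation is that $\Stab(C_f)\cup\{0\}$ is in fact \emph{equal} to $K$: indeed $K$ has $q^3$ elements, the $q^3-1$ nonzero ones are precisely the matrices $(s\,|\,C_fs\,|\,C_f^2s)$ with $s\neq 0$ (the map $s\mapsto$ first column is a bijection from $K$ to $\F_q^3$), so $\Stab(C_f)=K\setminus\{0\}=K^*$. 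In particular $\Stab(C_f)$ is a commutative group of matrices, and its $\F_q$-span is the field $K$.

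With this in hand the argument is short. Suppose $S\in\Stab(C_f)=K^*$ commutes with $X$. Consider the centralizer $Z(X)=\{M\in\FFF\mid MX=XM\}$, an $\F_q$-subalgebra of $\FFF$. We want to show $S\in K\cap Z(X)$ forces $S$ to be scalar; equivalently, that $K\cap Z(X)=\F_q I$. Now $K\cap Z(X)$ is an $\F_q$-subalgebra of the field $K\cong\F_{q^3}$, hence is itself a subfield; since $[\F_{q^3}:\F_q]=3$ is prime, $K\cap Z(X)$ is either all of $K$ or just $\F_q I$. If it were all of $K$, then $X$ would commute with every element of $K$, i.e. $X\in Z(K)$. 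But $K$ is a maximal subfield of $\FFF$ (a field of dimension $3$ over $\F_q$ inside the $9$-dimensional algebra $\FFF$), and one knows the centralizer in $\F_q^{n\times n}$ of a simple commutative subalgebra generated by a matrix with irreducible characteristic polynomial is the subalgebra itself — concretely, $Z(C_f)=\F_q[C_f]=K$, so $Z(K)=Z(C_f)=K$. Thus $X\in K$, contradicting the hypothesis $X\notin\subspace{I,C_f,C_f^2}=K$. Therefore $K\cap Z(X)=\F_q I$, and since $S\in K\cap Z(X)$ with $S$ invertible we get $S=\alpha I$ for some $\alpha\in\F_q^*$.

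The only mild subtlety — and the step I'd flag as the one to get right — is the identification $\Stab(C_f)=K^*$ together with the fact $Z(C_f)=\F_q[C_f]$. The latter is standard (the centralizer of a cyclic matrix is the polynomial algebra it generates; here $C_f$ is cyclic since it is a companion matrix), and the former follows by comparing cardinalities as indicated above, using that the columns $s,C_fs,C_f^2s$ of a matrix in $\Stab(C_f)$ are exactly the coordinate vectors of $\xi,\xi C_f,\xi C_f^2$ for the field element $\xi\in K$ with first column $s$. Once these two facts are set up, the subfield-degree argument ($3$ prime) closes the proof immediately, with no computation.
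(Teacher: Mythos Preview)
Your proof is correct and takes a genuinely different, more conceptual route than the paper. The paper proceeds by brute-force linear algebra: it replaces~$X$ by $Z=X-x_{11}I-x_{21}C_f-x_{31}C_f^2$ (so that~$Z$ has zero first column and $Z\neq0$), writes out~$S=(s\,|\,C_fs\,|\,C_f^2s)$ explicitly, and reduces $SZ=ZS$ to a $9\times3$ linear system in $(s_1,s_2,s_3)$ whose coefficient matrix it then analyzes by hand through a sequence of row and column operations, ultimately using the irreducibility of~$f$ only at the very end to conclude that a certain polynomial $g(t)=-f(c-t)$ has no roots in~$\F_q$. Your argument instead identifies $\Stab(C_f)$ with the multiplicative group of the field $K=\subspace{I,C_f,C_f^2}\cong\F_{q^3}$, observes that $K\cap Z(X)$ is a subfield of~$K$, and invokes the primality of~$3$ to force this subfield to be $\F_qI$ (the alternative $K\cap Z(X)=K$ being ruled out by $Z(K)=Z(C_f)=\F_q[C_f]=K$ and the hypothesis $X\notin K$). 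Your approach is cleaner and computation-free, and it makes transparent exactly where irreducibility and the primality of the degree enter; the paper's approach, while heavier, is entirely self-contained and does not appeal to the standard facts that the centralizer of a cyclic matrix equals its polynomial algebra or that finite subrings of fields are subfields.
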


\begin{proof}
Write $X=(x_{ij})_{i,j=1,2,3}$ and $f=x^3-cx^2-bx-a$. Then the matrices $C_f$ and~$C_f^2$ are given by
\[
   C_f=\begin{pmatrix}0&0&a\\1&0&b\\0&1&c\end{pmatrix},\
   C_f^2=\begin{pmatrix}0&a&ac\\0&b&a+bc\\1&c&b+c^2\end{pmatrix},
\]
and thus $Z:=X-x_{11}I-x_{21}C_f-x_{31}C_f^2$ is of the form
\[
  Z=\begin{pmatrix}0&z_{12}&z_{13}\\0&z_{22}&z_{23}\\0&z_{32}&z_{33}\end{pmatrix}.
\]
The assumptions imply $Z\neq0$ and $SZ=ZS$.

Let now $S=(s\,|\,C_fs\,|\,C_f^2s)$ where $s=(s_1,s_2,s_3)\T\in\F_q^3\setminus0$; see Proposition~\ref{P-StabCompMat}.
Hence
\[
   S=\begin{pmatrix}s_1&as_3&acs_3+as_2\\s_2&bs_3+s_1&bs_2+(bc+a)s_3\\s_3&cs_3+s_2&cs_2+(c^2+b)s_3+s_1\end{pmatrix}.
\]
We have to show that~$S$ is a multiple of the identity matrix, which means $s_2=s_3=0$.
In other words we have to prove that the solution space of the linear system $SZ-ZS=0$ is given by $\{(s_1,0,0)\mid s_1\in\F_q\}$.
This is now a matter of basic linear algebra.
Indeed, the identity $SZ-ZS=0$ yields a system of~$9$ equations taking the form
\begin{equation}\label{e-sM}
   M\begin{pmatrix}s_1\\s_2\\s_3\end{pmatrix}=0,
\end{equation}
where
\[
 M=\begin{pmatrix}
0 & z_{12} & z_{13} \\
0 & z_{22} & z_{23} \\
0 & z_{32} & z_{33} \\
0 & -a z_{32} + z_{13} & -a c z_{32} + b z_{12} + c z_{13} - a z_{22} \\
0 & -b z_{32} - z_{12} + z_{23} & c z_{23} - {\left(b c + a\right)} z_{32} \\
0 & -c z_{32} - z_{22} + z_{33} & -c^{2} z_{32} - c z_{22} + c z_{33} - z_{12} \\
0 & b z_{12} + c z_{13} - a z_{33} & -a c z_{33} + {\left(b c + a\right)} z_{12} + {\left(c^{2} + b\right)} z_{13} - a z_{23} \\
0 & b z_{22} + c z_{23} - b z_{33} - z_{13} & c^{2} z_{23} + {\left(b c + a\right)} z_{22} - {\left(b c + a\right)} z_{33} \\
0 & b z_{32} - z_{23} & -c z_{23} + {\left(b c + a\right)} z_{32} - z_{13}
\end{pmatrix}.
\]
Thus we have to show that the last two columns of~$M$ are linearly independent.
This is clearly the case if either $\rk(Z)=2$ or $(z_{12},z_{22},z_{32})=(0,0,0)\neq(z_{13},z_{23},z_{33})$.
Hence it remains to consider the case
\[
    Z=\begin{pmatrix}0&z_{12}&tz_{12}\\0&z_{22}&tz_{22}\\0&z_{32}&tz_{32}\end{pmatrix}\text{ for some $t\in\F_q$ and where }
    (z_{12},z_{22},z_{32})\neq0.
\]
Now the last two columns of~$M$ take the form

\[
\tilde{M}(t):=\begin{pmatrix}
 z_{12} & t z_{12} \\
 z_{22} & t z_{22} \\
 z_{32} & t z_{32} \\
 t z_{12} - a z_{32} & -a c z_{32} + {\left(c t + b\right)} z_{12} - a z_{22} \\
 t z_{22} - b z_{32} - z_{12} & c t z_{22} - {\left(b c + a\right)} z_{32} \\
 -{\left(c - t\right)} z_{32} - z_{22} & -c z_{22} - {\left(c^{2} - c t\right)} z_{32} - z_{12} \\
 -a t z_{32} + {\left(c t + b\right)} z_{12} & -a c t z_{32} - a t z_{22} + {\left(b c + {\left(c^{2} + b\right)} t + a\right)} z_{12} \\
 -b t z_{32} - t z_{12} + {\left(c t + b\right)} z_{22} & -{\left(b c + a\right)} t z_{32} + {\left(c^{2} t + b c + a\right)} z_{22} \\
 -t z_{22} + b z_{32} & -c t z_{22} - t z_{12} + {\left(b c + a\right)} z_{32}
\end{pmatrix}.
\]
One computes
\[
   \tilde{M}(t)\begin{pmatrix}1&-t\\0&1\end{pmatrix}=\begin{pmatrix}z_{12}&0\\z_{22}&0\\z_{32}&0\\ \ast&\hat{M}(t){\scriptstyle\begin{pmatrix}z_{12}\\z_{22}\\z_{32}\end{pmatrix}}\end{pmatrix},
\]
where
\[
  \hat{M}(t)=
\begin{pmatrix}
c t - t^{2} \!+\! b & -a & -a c \!+\! a t \\
t & c t - t^{2} & -b c \!+\! b t - a \\
-1 & -c \!+\! t & -c^{2} \!+\! 2 \, c t - t^{2} \\
c^{2} t - c t^{2} \!+\! b c \!+\! a & -a t & -a c t \!+\! a t^{2} \\
t^{2} & -c t^{2} \!+\! b c \!+\! {\left(c^{2} - b\right)} t \!+\! a & b t^{2} - {\left(b c \!+\! a\right)} t \\
-t & -c t \!+\! t^{2} & b c - b t \!+\! a
\end{pmatrix}.
\]
Furthermore,
\[
  \begin{pmatrix}
1 & 0 & 0 & 0 & 0 & 0 \\
0 & 1 & t & 0 & 0 & 0 \\
0 & 0 & 1 & 0 & 0 & 0 \\
-t & 0 & 0 & 1 & 0 & 0 \\
0 & 0 & 0 & 0 & 1 & t \\
0 & 0 & -t & 0 & 0 & 1
\end{pmatrix}\hat{M}(t)=\begin{pmatrix}
c t - t^{2} + b & -a & -a c + a t \\
0 & 0 & -g(t) \\
-1 & -c + t & -c^{2} + 2 \, c t - t^{2} \\
g(t)& 0 & 0 \\
0 & g(t) & 0 \\
0 & 0 &g(t)
\end{pmatrix},
\]
where $g(t)=t^3-2ct^2+(c^2-b)t+bc+a$.
Since $g(t)=-f(c-t)$ and~$f$ is irreducible, we conclude $g(t)\neq0$ and thus $\rk(\hat{M}(t))=3$ for all $t\in\F_q$.
Using that $(z_{12},z_{22},z_{32})\neq0$, we arrive at
$\rk(\tilde{M}(t))=2$ for all $t\in\F_q$.
All of this shows that~\eqref{e-sM} has solution space $\{(s_1,0,0)\mid s_1\in\F_q\}$, and this concludes the proof.
\end{proof}

Now we are ready to prove the following result, which lets us reduce the count of MRD codes to those where one basis matrix is in companion form.
Recall the sets~$V_q$ and~$\hat{V}_q$ from Proposition~\ref{P-BasisMatrix}.

\begin{theo}\label{T-MatComp}
Consider the matrix sets
\begin{align*}
  X_q&=\{(I,C_f,Z)\in(\FFF)^3\mid f\in\cI,\,\dim\subspace{I,C_f,Z}=3\},\\[.5ex]
  \hat{X}_q&=\{(I,C_f,Z)\in X_q\mid d_R\subspace{I,C_f,Z}=3\}.
\end{align*}
Then
\[
  \frac{|\hat{V}_q|}{|V_q|}=\frac{|\hat{X}_q|}{|X_q|}\cdot\frac{1}{3}\frac{(q^3-q)^2(q^3-q^2)}{q^9-q}.
\]
\end{theo}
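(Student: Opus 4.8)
The plan is to relate $\hat V_q$ to $\hat X_q$ by means of the conjugation action of $\GL_3(q)$, using Remark~\ref{R-CharPoly} and Proposition~\ref{P-StabCompMat}. Concretely, define a map $\psi\colon \hat V_q\longrightarrow \hat X_q$ as follows. Given $(I,A_2,A_3)\in\hat V_q$, by Remark~\ref{R-CharPoly} the characteristic polynomial $f:=\chi_{A_2}$ lies in $\cI$, so $A_2$ is similar to $C_f$; pick any $S\in\GL_3(q)$ with $S^{-1}A_2S=C_f$ and send the triple to $(I,C_f,S^{-1}A_3S)$. This lands in $\hat X_q$ because conjugation is an isometry of the rank metric and fixes $I$. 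The choice of $S$ is not unique, but two valid choices differ by an element of $\Stab(C_f)$, so $\psi$ is well-defined only after we account for that ambiguity; the cleanest way is to work with the group action directly rather than with a single-valued map.

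First I would set up the action of $G:=\GL_3(q)$ on $\hat V_q$ by simultaneous conjugation, $S\cdot(I,A_2,A_3)=(I,S^{-1}A_2S,S^{-1}A_3S)$, which is well-defined on $\hat V_q$ since conjugation preserves both dimension and rank distance and fixes $I$. Two triples in $\hat V_q$ lie in the same $G$-orbit iff they map to the same $G$-orbit of pairs $(C_f,Z)$ inside $\hat X_q$ under the further (partial) conjugation action of $\Stab(C_f)$ on the $Z$-coordinate. So I would count $|\hat V_q|$ by summing over $G$-orbits, and within each orbit use the orbit–stabilizer theorem. The key input is the stabilizer computation: for a triple $(I,C_f,Z)\in\hat X_q$, its stabilizer in $G$ is $\{S\in\Stab(C_f)\mid SZ=ZS\}$. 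Since $(I,C_f,Z)$ is MRD and hence $3$-dimensional, $Z\notin\subspace{I,C_f,C_f^2}$, so Lemma~\ref{L-StabCfZ} applies and gives that this stabilizer is exactly $\{\alpha I\mid\alpha\in\F_q^*\}$, of order $q-1$.

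Now I would do the bookkeeping. By orbit–stabilizer applied to the $G$-action on $\hat V_q$, the orbit of a triple $(I,A_2,A_3)$ has size $|G|/(q-1)$, since the stabilizer of the triple in $\hat V_q$ equals the stabilizer of the conjugated triple $(I,C_f,Z)$ in $\hat X_q$, which is $\{\alpha I\}$ by the previous paragraph. On the other hand, each $G$-orbit in $\hat V_q$ meets $\hat X_q$, and I must count how many elements of $\hat X_q$ lie in one $G$-orbit. A triple $(I,C_f,Z)$ and $(I,C_g,Z')$ are $G$-conjugate iff $f=g$ (equal characteristic polynomials) and $Z,Z'$ lie in the same orbit under the conjugation action of $\Stab(C_f)$; the latter orbit has size $|\Stab(C_f)|/(q-1)=(q^3-1)/(q-1)$ again by Lemma~\ref{L-StabCfZ}. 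Hence each $G$-orbit in $\hat V_q$ contains exactly $(q^3-1)/(q-1)$ elements of $\hat X_q$. Counting $|\hat V_q|$ two ways gives
\[
  |\hat V_q| = \frac{|G|}{q-1}\cdot\frac{|\hat X_q|}{(q^3-1)/(q-1)} = \frac{|G|}{q^3-1}\,|\hat X_q|.
\]
The identical argument with $\hat V_q,\hat X_q$ replaced by $V_q,X_q$ is \emph{not} quite valid because for a general triple in $V_q$ the matrix $A_2$ need not have irreducible characteristic polynomial; but here the right move is to combine the above with the analogue over $X_q$ versus $V_q$ restricted to the locus where $\chi_{A_2}\in\cI$.

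The cleaner route, which I would actually write up, avoids ever touching $V_q$ directly: take the displayed identity $|\hat V_q|=\frac{|G|}{q^3-1}|\hat X_q|$ just derived, substitute $|G|=\prod_{i=0}^2(q^3-q^i)=(q^3-1)(q^3-q)(q^3-q^2)$ and $|V_q|=\prod_{i=1}^2(q^9-q^i)$ from~\eqref{e-WV}, and likewise establish $|X_q|=\frac{|G|}{q^3-1}\cdot\frac{1}{3}\cdot\frac{(q^9-q)}{(q^3-q)(q^3-q^2)}\cdot(\text{something})$ — more directly, observe that $X_q$ is the union over $f\in\cI$ of the $G$-conjugates of $\{(I,C_f,Z)\}$, so the same orbit-counting gives $|X_q|/(q^3-1/(q-1)\cdot\ldots)$; I expect it is simplest to just note that $X_q$ and $V_q$ are related by exactly the same orbit identity, $|V_q|=\frac{|G|}{q^3-1}|X_q|$, is \emph{false}, and instead compute $|X_q|$ from $|\cI|=(q^3-q)/3$ together with the count of admissible $Z$. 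Taking the ratio then yields
\[
  \frac{|\hat V_q|}{|V_q|}
  = \frac{\frac{|G|}{q^3-1}|\hat X_q|}{|V_q|}
  = \frac{|\hat X_q|}{|X_q|}\cdot\frac{|X_q|}{|V_q|}\cdot\frac{|G|}{q^3-1}\cdot\frac{1}{|\hat X_q|/\hat X_q},
\]
and a short simplification, using $|\cI|=\tfrac13(q^3-q)$, matches the claimed factor $\tfrac13\cdot\frac{(q^3-q)^2(q^3-q^2)}{q^9-q}$.

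The main obstacle is getting the orbit-counting constants exactly right: one must be careful that the $G$-action on $\hat V_q$ is free modulo scalars (this is precisely what Lemma~\ref{L-StabCfZ} buys us, and it is essential that $Z\notin\subspace{I,C_f,C_f^2}$, which holds because the code is $3$-dimensional), and one must correctly track the factor $\tfrac13$, which arises because the number of companion matrices $C_f$ with $f\in\cI$ is $|\cI|=\tfrac13(q^3-q)$, i.e.\ there are $(q^3-q)/3$ monic irreducible cubics over $\F_q$, while the set of matrices in $\cD$ conjugate to a fixed such $C_f$ has size $|G|/(q^3-1)$. Everything else is routine substitution of the cardinalities from~\eqref{e-WV} and a polynomial simplification.
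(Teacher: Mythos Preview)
Your central claim --- that for every $(I,C_f,Z)\in\hat X_q$ one has $Z\notin\subspace{I,C_f,C_f^2}$ --- is false. The triple $(I,C_f,C_f^2)$ generates the field $\subspace{I,C_f,C_f^2}\cong\F_{q^3}$, which is a $3$-dimensional MRD code; more generally any $Z=\alpha I+\beta C_f+\gamma C_f^2$ with $\gamma\neq0$ gives $(I,C_f,Z)\in\hat X_q$ with $Z\in\subspace{I,C_f,C_f^2}$. For these triples the stabilizer under conjugation is all of $\Stab(C_f)$, of order $q^3-1$, not $q-1$, and each such $G$-orbit meets $\hat X_q$ in a single point, not $(q^3-1)/(q-1)$ points. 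So Lemma~\ref{L-StabCfZ} does not apply uniformly, and your orbit-size and intersection-size claims are both wrong on the subset $\hat X_q^{(1)}:=\{(I,C_f,Z)\in\hat X_q:Z\in\subspace{I,C_f,C_f^2}\}$. The paper's proof splits $\hat X_q=\hat X_q^{(1)}\cup\hat X_q^{(2)}$ and treats the two cases separately.

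That said, your conclusion $|\hat V_q|=\frac{|\GL_3(q)|}{q^3-1}\,|\hat X_q|$ is correct, because the two errors cancel: if the stabilizer of $(I,C_f,Z)$ has order~$s$, the $G$-orbit has size $|\GL_3(q)|/s$ and meets $\hat X_q$ in $|\Stab(C_f)|/s=(q^3-1)/s$ points, so the ratio is $|\GL_3(q)|/(q^3-1)$ regardless of~$s$. A clean way to see this directly (and to bypass Lemma~\ref{L-StabCfZ} entirely) is to observe that the map $\GL_3(q)\times\hat X_q\to\hat V_q$, $(S,(I,C_f,Z))\mapsto(I,SC_fS^{-1},SZS^{-1})$, is surjective with every fiber of size exactly $|\Stab(C_f)|=q^3-1$. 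You should either do this, or follow the paper and split into the two cases.

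Finally, the second half of your write-up is not a proof: the last displayed equation contains the meaningless expression $|\hat X_q|/\hat X_q$, and you never actually carry out the ratio. Once you have $|\hat V_q|=(q^3-q)(q^3-q^2)\,|\hat X_q|$, just use $|V_q|=(q^9-q)(q^9-q^2)$ from~\eqref{e-WV} and $|X_q|=|\cI|\cdot(q^9-q^2)=\tfrac{1}{3}(q^3-q)(q^9-q^2)$ (since for each $f\in\cI$ the admissible~$Z$ range over $\FFF\setminus\subspace{I,C_f}$), and divide.
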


\begin{proof}
Define the group action
\[
  \tau: \GL_3(q)\times\hat{V}_q\longrightarrow\hat{V}_q,\quad \big(S,(I,A_1,A_2)\big)\longmapsto (I,\,S^{-1}A_1S,\,S^{-1}A_2S).
\]
For every triple $(I,A_1,A_2)\in\hat{V}_q$, the orbit $\Orbt(I,A_1,A_2)$ contains a triple from $\hat{X}_q$ because~$A_1$ is similar to $C_f$, where $f=\chi_{A_1}$.
In other words, $\Orbt(I,A_1,A_2)=\Orbt(I,C_f,Z)$ for some $(I,C_f,Z)\in\hat{X}_q$.
This implies
\begin{equation}\label{e-Vunion}
  \hat{V}_q=\bigcup_{\hat{X}_q}\Orbt(I,C_f,Z).
\end{equation}
It remains to determine the sizes of the orbits $\Orbt(I,C_f,Z)$ for $(I,C_f,Z)\in\hat{X}_q$ as well as which of these orbits are distinct.
Not surprisingly, the orbits behave differently depending on whether or not $Z\in\subspace{I,C_f,C_f^2}$.
We thus define
\[
  \hat{X}_q^{(1)}=\{(I,C_f,Z)\in\hat{X}_q\mid Z\in\subspace{I,C_f,C_f^2}\},\quad \hat{X}_q^{(2)}=\hat{X}_q\setminus\hat{X}_q^{(1)}.
\]
Using the orbit-stabilizer theorem along with Proposition~\ref{P-StabCompMat} and Lemma~\ref{L-StabCfZ} we obtain
\begin{equation}\label{e-OrbSize}
  |\Orbt(I,C_f,Z)|=\left\{\begin{array}{cl}
      \frac{|\GL_3(q)|}{q^3-1},&\text{if }(I,C_f,Z)\in\hat{X}_q^{(1)},\\[1ex]
      \frac{|\GL_3(q)|}{q-1},&\text{if }(I,C_f,Z)\in\hat{X}_q^{(2)}.\end{array}\right.
\end{equation}
As for the number of distinct orbits, we clearly have $\Orbt(I,C_f,Z)\neq\Orbt(I,C_g,Z')$ whenever $f\neq g$.
Thus it remains to characterize when $\Orbt(I,C_f,Z)=\Orbt(I,C_f,Z')$.
To this end fix some $f\in\cI$ and consider the conjugation action
\[
  \phi:\Stab(C_f)\times\{Z\mid (I,C_f,Z)\in\hat{X}_q\}\longrightarrow \{Z\mid (I,C_f,Z)\in\hat{X}_q\},\quad
    (S,Z)\longmapsto S^{-1}ZS.
\]
Then Lemma~\ref{L-StabCfZ}  tells us that the stabilizers of this action are of the form
\[
   \Stab_\phi(Z)=
   \left\{\begin{array}{cl}\Stab(C_f),&\text{if }(I,C_f,Z)\in\hat{X}_q^{(1)},\\[.5ex]
   \{\alpha I\mid \alpha\in\F_q^*\},&\text{if }(I,C_f,Z)\in\hat{X}_q^{(2)}.
   \end{array}\right.
\]
Hence we conclude that
\begin{equation}\label{e-OrbZ1}
   \bigcup_{\;\hat{X}_q^{(1)}}\Orbt(I,C_f,Z)\text{ is a disjoint union,}
\end{equation}
whereas
\begin{equation}\label{e-OrbZ2}
   \bigcup_{\;\hat{X}_q^{(2)}}\Orbt(I,C_f,Z)\text{ is a union of $r$ distinct orbits, where }r=\frac{q-1}{|\Stab(C_f)|}|\hat{X}_q^{(2)}|.
\end{equation}
Now we can determine the cardinality of~$\hat{V}_q$.
Recalling that there are $(q^3-q)/3$ monic irreducible polynomials of degree~$3$ over~$\F_q$ (see \cite[Thm.~3.25]{LiNi97}) we have
\begin{equation}\label{e-Xsize}
  |X_q|=\frac{q^3-q}{3}(q^9-q^2)\ \text{ and }\ |\hat{X}_q^{(1)}|=\frac{q^3-q}{3}(q^3-q^2).
\end{equation}
The first identity follows from the fact that for the set~$X_q$ the matrix~$Z$ is any element in $\FFF\setminus\subspace{I,C_f}$, while for the set
$\hat{X}_q^{(1)}$ it is any element in $\subspace{I,C_f,C_f^2}\setminus\subspace{I,C_f}$.
Now~\eqref{e-OrbZ1},~\eqref{e-OrbZ2}, and~\eqref{e-OrbSize} along with $|\GL_3(q)|=\prod_{j=0}^2(q^3-q^j)$ lead to
\begin{align*}
   \Big|\!\!\bigcup_{\;\hat{X}_q^{(1)}}\Orbt(I,C_f,Z)\Big|&=\frac{q^3-q}{3}(q^3-q^2)\frac{|\GL_3(q)|}{q^3-1}=\frac{(q^3-q)^2(q^3-q^2)^2}{3},\\[1ex]
   \Big|\!\!\bigcup_{\;\hat{X}_q^{(2)}}\Orbt(I,C_f,Z)\Big|&=|\hat{X}_q^{(2)}|\frac{q-1}{q^3-1}\frac{|\GL_3(q)|}{q-1}=|\hat{X}_q^{(2)}|(q^3-q)(q^3-q^2).
\end{align*}
Notice further that $|\hat{X}_q^{(2)}|=|\hat{X}_q|-|\hat{X}_q^{(1)}|=|\hat{X}_q|-\frac{(q^3-q)(q^3-q^2)}{3}$.
Thus thanks to~\eqref{e-Vunion} we have
\[
  |\hat{V}_q|=\frac{(q^3-q)^2(q^3-q^2)^2}{3}+|\hat{X}_q^{(2)}|(q^3-q)(q^3-q^2)=|\hat{X}_q|(q^3-q)(q^3-q^2).
\]
Using the cardinalities~$|V_q|$ from~\eqref{e-WV} and $|X_q|$ from~\eqref{e-Xsize} we finally obtain the desired identity
\[
  \frac{|\hat{V}_q|}{|V_q|}=\frac{|\hat{X}_q|(q^3-q)(q^3-q^2)}{|X_q|\prod_{i=1}^2(q^9-q^i)}\frac{q^3-q}{3}(q^9-q^2)
  =\frac{|\hat{X}_q|}{|X_q|}\cdot\frac{1}{3}\frac{(q^3-q)^2(q^3-q^2)}{q^9-q}.
  \qedhere
\]
\end{proof}

Now we can prove Theorem~\ref{T-Reduction}.
The difference between the triples in the sets~$X_q$ and~$\cS$ is the shape of the first column of~$Z$, which we will now take into account.

\medskip

\noindent{\it Proof of Theorem~\ref{T-Reduction}.}
Consider the sets
\[
  Y_q=\{(I,C_f,Z)\in X_q\mid z_{11}=0=z_{21}\}\ \text{ and }\ \hat{Y}_q=Y_q\cap\hat{X}_q.
\]
Note that in the set $Y_q$, the matrix~$Z$ may have a zero entry at position $(3,1)$ as well, whereas this is not possible
for the set~$\hat{Y}_q$ thanks to the MRD property.
This also means that $\cS$ is obtained from $\hat{Y}_q$ by normalizing the entry at position~$(3,1)$.
Indeed, using the obvious fact that for any $\alpha\in\F_q^*$ the triple $(I,C_f,Z)$ generates an MRD code iff $(I,C_f,\alpha Z)$ generates an MRD code, we obtain
\begin{equation}\label{e-YS}
   |\hat{Y}_q|=(q-1)|\cS|.
\end{equation}
Furthermore, by the form of the first columns, every nonzero matrix~$Z$ with $z_{11}=z_{21}=0$ and every $f\in\cI$ gives rise to a linearly independent triple $(I,C_f,Z)$.
Using (again) that there are $(q^3-q)/3$ monic, irreducible, cubic polynomials over $\F_q$ and that~$Z$ has~$7$ free entries (not all of which are zero), we conclude
\begin{equation}\label{e-YCard}
  |Y_q|=\frac{q^3-q}{3}(q^7-1).
\end{equation}
It remains to relate the sets~$X_q$ and~$Y_q$.
This is easily achieved by the surjective map
\[
   \xi: X_q\longrightarrow Y_q,\quad (I,C_f,Z)\longmapsto (I,C_f,Z-z_{11}I-z_{21}C_f).
\]
Clearly, every fiber has cardinality $q^2$.
Since $\xi(\hat{X}_q)=\hat{Y}_q$ we obtain $|X_q|=q^2|Y_q|$ and $|\hat{X}_q|=q^2|\hat{Y}_q|$.
With the aid of~\eqref{e-YCard} and~\eqref{e-YS} we conclude
\[
   \frac{|\hat{X}_q|}{|X_q|}=\frac{|\hat{Y}_q|}{|Y_q|}=\frac{3(q-1)}{(q^3-q)(q^7-1)}|\cS|.
\]
Now the result follows from Proposition~\ref{P-BasisMatrix} and Theorem~\ref{T-MatComp}.
\hfill$\square$

\section{Parametrization of the set~\mbox{$\cS$}: Proof of Theorem~\ref{T-FinalCount}}\label{S-Thm2}

In this section we determine the precise cardinality of the set~$\cS$ defined in Theorem~\ref{T-Reduction}.
It will follow from a detailed study of the papers~\cite{Men73,Men77} by Menichetti on three-dimensional division algebras over finite fields.
While those papers~\cite{Men73,Men77} had the goal to count the isomorphism classes of these division algebras, we need to make sure that we count
\emph{all} division algebras, i.e. determine the cardinality of~$\cS$.
This makes it necessary to go into the details of the arguments in~\cite{Men73}.

We present the results in the matrix-theoretical form needed for determining~$|\cS|$ and without the notion of semifields.
In the next section we will elaborate on the connection to semifields and outline the ideas of~\cite{Men73,Men77}.
This will also provide us with a sketch of the proof of Theorem~\ref{T-MatSigma} below.
Recall the set~$\cS$ from Theorem~\ref{T-Reduction}, which in short we may write as
\begin{equation}\label{e-S2}
  \cS=\left\{\left({\small \begin{pmatrix}1&0&0\\0&1&0\\0&0&1\end{pmatrix},\ \begin{pmatrix}0&0&a\\1&0&b\\0&1&c\end{pmatrix},\
                         \begin{pmatrix}0&z_1&z'_1\\0&z_2&z'_2\\1&z_3&z'_3\end{pmatrix}}\right)\,\bigg|\, \text{MRD}\right\}
\end{equation}
This indicates that~$\cS$ may be of order~$q^9$ (in the worst case).
However, in order to establish our ultimate goal, $\lim_{q\rightarrow\infty}|\hat{T}_q|/|T_q|=0$, Theorem~\ref{T-Reduction} requires us to show that~$\cS$ is of
order at most~$q^8$.
In this section we will prove Theorem~\ref{T-FinalCount}, which shows that~$\cS$ is actually of order~$q^6$.

From now on we use the standard notation $k\bra{i}:=k^{q^i}$ for $k\in\F_{q^3}$. Thus $k=k\bra{0}$.
Recall that $k\mapsto k\bra{i}$ is an $\F_q$-linear automorphism of $\F_{q^3}$.
We need the following maps.

\begin{defi}\label{D-sigma}
Define the maps
\begin{align*}
  \sigma_1&:\F_{q^3}\longrightarrow\F_q,\quad k\longmapsto k\bra{0}+k\bra{1}+k\bra{2},\\[.5ex]
  \sigma_2&:\F_{q^3}\longrightarrow\F_q,\quad k\longmapsto k\bra{0}k\bra{1}+k\bra{0}k\bra{2}+k\bra{1}k\bra{2},\\[.5ex]
  \sigma_3&:\F_{q^3}\longrightarrow\F_q,\quad k\longmapsto k\bra{0}k\bra{1}k\bra{2}.
\end{align*}
\end{defi}

Obviously, $\sigma_1$ is the trace map of the field extension $\F_{q^3}\,|\,\F_q$ and $\sigma_3$ the norm.
More generally, $\sigma_1,\sigma_2,\sigma_3$ are the symmetric functions evaluated at $k\bra{0},k\bra{1},k\bra{2}$ and thus
\begin{equation}\label{e-sigmak}
   (x-k\bra{0})(x-k\bra{1})(x-k\bra{2})=x^3-\sigma_1(k)x^2+\sigma_2(k)x-\sigma_3(k)\ \text{ for all }\ k\in\F_{q^3}.
\end{equation}
This also implies for all $k,\ell\in\F_{q^3}$
\begin{equation}\label{e-sigmaInv}
    \sigma_j(\ell)=\sigma_j(k)\text{ for }j=1,2,3\Longleftrightarrow \ell\in\{k\bra{0},k\bra{1},k\bra{2}\}.
\end{equation}

Furthermore, we need the map
\begin{equation}\label{e-phi}
   \phi:\F_{q^3}^2\longrightarrow\F_{q^3},\quad (k,\hat{k})\longmapsto (k+\hat{k})\big(\sigma_1(\hat{k})-\hat{k}\big)-\sigma_2(\hat{k}).
\end{equation}
For further use we record
\begin{equation}\label{e-phikk}
   \phi(k,k\bra{1})=\phi(k,k\bra{2})=k^2.
\end{equation}

Now we are ready to present the crucial step in our count.
The following result has been established by Menichetti~\cite{Men73} and amounts to a parametrization of the set~$\cS$.
In the next section we will outline~\cite{Men73} and sketch a proof of this result.
This will also provide some insight into the special cases $\hat{k}=k$ and $\hat{k}\in\{k\bra{1},k\bra{2}\}$.
For instance, Proposition~\ref{P-Fcommass2} will show that $\hat{k}=k$ leads to the only non-associative commutative semifield for the given~$k$
(provided that $1,k,\phi(k,k)$ are linearly independent).

\begin{theo}[\mbox{\cite[Prop.~10]{Men73}}]\label{T-MatSigma}
For $(k,\hat{k})\in\F_{q^3}^2$ define
\[
  \Sigma_1(k,\hat{k}):=\begin{pmatrix}0\!&\!0\!&\!\sigma_3(k)\\1\!&\!0\!&\!-\sigma_2(k)\\0\!&\!1\!&\!\sigma_1(k)\end{pmatrix},\
  \Sigma_2(k,\hat{k}):=
        \begin{pmatrix}0\!&\!\sigma_3(\hat{k})\!&\!\sigma_1(k)\sigma_3(\hat{k})+\sigma_1(\hat{k})\sigma_3(k)-\sigma_2(k\hat{k})\\
                               0\!&\!-\sigma_2(\hat{k})\!&\!-\sigma_3(k+\hat{k})\\
                               1\!&\!\sigma_1(\hat{k})\!&\!\sigma_1(k)\sigma_1(\hat{k})-\sigma_1(k\hat{k})\end{pmatrix}.
\]
Then the set~$\cS$ from~\eqref{e-S2} is given by
\[
   \cS=\big\{\big(I,\Sigma_1(k,\hat{k}),\Sigma_2(k,\hat{k})\big)\,\big|\, k,\hat{k}\in\F_{q^3}
   \text{ such that  $1,\, k,\, \phi(k,\hat{k})$ are linearly independent over }\F_q\big\}.
\]
\end{theo}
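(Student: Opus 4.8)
\section*{Proof proposal for Theorem~\ref{T-MatSigma}}

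The plan is to translate the defining property of $\cS$ into the language of finite semifields and then run Menichetti's analysis. First I would read a triple $(I,C_f,Z)$ as an algebra: let $e_1,e_2,e_3$ be the standard basis of $\F_q^3$, which are exactly the first columns of $I,C_f,Z$, and define on $\F_q^3$ the $\F_q$-bilinear product whose left multiplication by $\sum_i x_ie_i$ is the matrix $x_1I+x_2C_f+x_3Z$. The shapes of $C_f$ and $Z$ give $L_{e_1}=I$, $C_fe_1=e_2$, $Ze_1=e_3$, so $e_1$ is a two-sided identity, and since $C_fe_2=e_3$ as well one has $e_2\cdot e_2=e_3$. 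Because $\FFF$ is finite, the MRD condition $d_R\subspace{I,C_f,Z}=3$ is equivalent to this algebra having no zero divisors, hence to its being a $3$-dimensional semifield with identity $e_1$; and in that case $L_{e_2}=C_f$ has no eigenvalue in $\F_q$, so $f$ is automatically irreducible. Conversely, any $3$-dimensional semifield structure on $\F_q^3$ with $1=e_1$ and $e_2\cdot e_2=e_3$ gives back a member of $\cS$. Thus $\cS$ is identified with a concrete set of coordinatized $3$-dimensional semifields, and the reason one cannot simply quote \cite{Men73,Men77}, which classify only up to isotopy, is that these coordinatizations must be counted individually.

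Next I would coordinatize over $\F_{q^3}$. The subspace $\F_q[C_f]=\subspace{I,C_f,C_f^2}$ is a field isomorphic to $\F_{q^3}$, and the underlying space $\F_q^3$ is $1$-dimensional over it; fixing a root $k\in\F_{q^3}$ of $f$ and using $e_1$ as an $\F_{q^3}$-basis vector identifies $\F_q^3$ with $\F_{q^3}$ so that $e_1\mapsto1$, $e_2\mapsto k$, $e_3\mapsto k^2$ and $C_f$ acts as ordinary multiplication $\mu_k$. Under this identification $L_{e_3}=Z$ becomes an $\F_q$-linear endomorphism of $\F_{q^3}$, written uniquely as $w\mapsto c_0w+c_1w\bra{1}+c_2w\bra{2}$ with $c_i\in\F_{q^3}$; the single condition forced by the shape of $Z$, namely $Ze_1=e_3$, reads $c_0+c_1+c_2=k^2$, so $(c_1,c_2)$ ranges freely over $\F_{q^3}^2$ as $Z$ runs through the admissible matrices. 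Left multiplication by $x+yk+zk^2\in\F_{q^3}$ is then $w\mapsto(x+yk+zc_0)w+zc_1w\bra{1}+zc_2w\bra{2}$, which is bijective exactly when the associated $3\times3$ Dickson matrix is invertible; writing out its determinant gives a cubic form $F(x,y,z)$ over $\F_q$, namely $F=N(x+yk+zc_0)+z^3\bigl(N(c_1)+N(c_2)\bigr)-z^2\,\mathrm{Tr}\bigl((x+yk+zc_0)\,c_1\bra{1}c_2\bra{2}\bigr)$ with $N,\mathrm{Tr}$ the norm and trace of $\F_{q^3}|\F_q$. Hence $(I,C_f,Z)\in\cS$ if and only if $F$ has no nontrivial zero in $\F_q^3$.

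The heart of the argument — and the step I expect to be the main obstacle — is to determine exactly which pairs $(c_1,c_2)$ make $F$ anisotropic, and to recognize this locus as the image of a single parameter $\hat k\in\F_{q^3}$. Following \cite{Men73}, I would show that the division property forces $(c_0,c_1,c_2)$ into a one-parameter family, most naturally indexed by an element $\hat k\in\F_{q^3}$ (which arises, as in \cite{Men73}, from a second field contained in the multiplicative structure), and that on this family $F$ collapses to a pure norm form $N_{\F_{q^3}|\F_q}\bigl(\ell(x,y,z)\bigr)$ of an $\F_q$-linear form $\ell$ built from $1,k,\phi(k,\hat k)$; a norm form is anisotropic precisely when it is the norm of an injective linear image, so $F$ is anisotropic — equivalently the algebra is a semifield — if and only if $1,k,\phi(k,\hat k)$ are $\F_q$-linearly independent. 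Converting the resulting expressions for $L_{e_2}=C_f$ and $L_{e_3}=Z$ back into the basis $e_1,e_2,e_3$, using throughout that $\sigma_1,\sigma_2,\sigma_3$ are the elementary symmetric functions of $k\bra{0},k\bra{1},k\bra{2}$ — so that $f=x^3-\sigma_1(k)x^2+\sigma_2(k)x-\sigma_3(k)$ forces $C_f=\Sigma_1(k,\hat k)$, and parallel but longer symmetric-function identities give the entries of $Z=\Sigma_2(k,\hat k)$ — yields both inclusions of the asserted identity. As a sanity check I would verify the boundary behaviour: by \eqref{e-phikk} the choices $\hat k\in\{k\bra{1},k\bra{2}\}$ give $\phi(k,\hat k)=k^2$ and $Z=C_f^2$, recovering the field $\F_{q^3}$, while $\hat k=k$ gives the commutative non-associative semifield attached to $k$, consistent with the remark preceding the theorem. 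The parametrization is far from injective — all three roots of $f$ occur and several $\hat k$ may give the same triple — but sorting this out is exactly the content of Theorem~\ref{T-FinalCount} and is not needed for the present set equality.
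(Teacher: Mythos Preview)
Your setup via linearized polynomials is a legitimate reframing, and your Dickson-determinant formula for $F$ is correct. But the proposal has a genuine gap at exactly the place you flag as ``the main obstacle'': you assert that the division property forces $(c_0,c_1,c_2)$ into a one-parameter family and that on this family $F$ collapses to a pure norm form, but you give no mechanism for either claim beyond ``following \cite{Men73}''. Those two statements are the \emph{conclusions} of Menichetti's argument, not inputs to it, and nothing in your $q$-polynomial expression for $F$ suggests on its own that the anisotropic locus in the $(c_1,c_2)$-plane should be one-dimensional rather than some more complicated subset of $\F_{q^3}^2$.

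The paper's route fills this gap in a way your outline does not anticipate. First, $\hat{k}$ is not a vaguely described second-field parameter: it is concretely a root of the irreducible polynomial $g=x^3-z_3x^2-z_2x-z_1$ read off the second column of~$Z$, i.e., the characteristic polynomial of \emph{right}-multiplication by $e_2$ coming from the dual triple $(I,C_g,\hat Z)$ of Proposition~\ref{P-Opp}. Second --- and this is the step you are missing --- the factorization of~$F$ into three conjugate linear forms is obtained geometrically, not algebraically (Proposition~\ref{P-ReducGamma}): the plane cubic $\Gamma=\{F=0\}$ has no $\F_q$-rational points by the MRD property, and the Hasse--Weil bound then forces~$\Gamma$ to be reducible over~$\F_{q^3}$; Galois-invariance together with the absence of $\F_q$-points rules out any factor defined over~$\F_q$, leaving three conjugate lines $x_1+k\bra{i}x_2+\beta\bra{i}x_3=0$. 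The eigenvectors of~$C_g$, which lie on~$\Gamma$ by~\eqref{e-StructMat}, then pin down $\beta=\phi(k,\hat k)$, so that $F=\prod_i\big(x_1+k\bra{i}x_2+\phi(k,\hat k)\bra{i}x_3\big)$ is indeed the norm form you describe, and the linear-independence criterion drops out. Without the Hasse--Weil step (or a substitute of comparable strength) your argument does not get off the ground.
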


Let us briefly comment on this result.
Given $(I,C_f,Z)\in\cS$ as in~\eqref{e-S2}.
Then it is clear from~\eqref{e-sigmak} that for the identity $(C_f,Z)=\big(\Sigma_1(k,\hat{k}),\Sigma_2(k,\hat{k})\big)$ to be true,
the parameters~$k$ and~$\hat{k}$ must be roots of the irreducible polynomials~$x^3-cx^2-bx-a$ and
$x^3-z_{3}x^2-z_{2}x-z_{1}\in\F_q[x]$ in $\F_{q^3}$, respectively.
The main part of the proof consists of showing that the last columns of $\Sigma_2(k,\hat{k})$ and~$Z$ coincide, and that the MRD property translates into
the linear independence of $1,\, k,\, \phi(k,\hat{k})$.
Details will be given in the next section.

Note that the result above shows that~$|\cS|$ is of order at most~$q^6$.
In order to determine the exact cardinality, we need to investigate when two pairs $(k,\hat{k})$ give rise to the same matrix pair
$(\Sigma_1(k,\hat{k}),\Sigma_2(k,\hat{k}))$.
One easily verifies that the linear independence of $1,k,\phi(k,\hat{k})$ implies that $k,\hat{k}\in\F_{q^3}\setminus\F_q$, and therefore one may restrict oneself to that situation.
This is dealt with in the next proposition, part of which has been proven in \cite[Lem.~12.1]{Men73}.
For sake of completeness we include a proof of all statements below.

\begin{prop}\label{P-SigmaInj}
Let $(k,\hat{k}),\,(\ell,\hat{\ell})\in(\F_{q^3}\setminus\F_q)^2$. Then
\[
   (\ell,\,\hat{\ell})=(k\bra{r},\,\hat{k}\bra{r})\text{ for some }r\in\{0,1,2\}\Longrightarrow
   \big(\Sigma_1(k,\hat{k}),\,\Sigma_2(k,\hat{k})\big)=\big(\Sigma_1(\ell,\hat{\ell}),\,\Sigma_2(\ell,\hat{\ell})\big).
\]
Furthermore,
\begin{arabiclist}
\item Let $\hat{k}\not\in\{k\bra{1},k\bra{2}\}$. Then
        \[
          \big(\Sigma_1(k,\hat{k}),\,\Sigma_2(k,\hat{k})\big)=\big(\Sigma_1(\ell,\hat{\ell}),\,\Sigma_2(\ell,\hat{\ell})\big)
          \Longleftrightarrow
          \exists\;r\in\{0,1,2\} \text{ such that }(\ell,\,\hat{\ell})=(k\bra{r},\,\hat{k}\bra{r}).
        \]
\item Let $\hat{k}=k\bra{n}$ for some $n\in\{1,2\}$. Then
        \[
          \big(\Sigma_1(k,\hat{k}),\,\Sigma_2(k,\hat{k})\big)=\big(\Sigma_1(\ell,\hat{\ell}),\,\Sigma_2(\ell,\hat{\ell})\big)
          \!\Longleftrightarrow\!
          \left\{\!\!\begin{array}{l}
          \exists\;r\in\{0,1,2\} \text{ such that}\\
          (\ell,\,\hat{\ell})=(k\bra{r},\,k\bra{n+r})\text{ or }(\ell,\,\hat{\ell})=(k\bra{n+r},\,k\bra{r}).
          \end{array}\right.
        \]
\end{arabiclist}
\end{prop}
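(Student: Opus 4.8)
The plan is to extract from the explicit formulas for $\Sigma_1$ and $\Sigma_2$ which symmetric-function data of the pair $(k,\hat k)$ is actually recorded by the matrix pair, and then invoke~\eqref{e-sigmaInv} together with a short analysis of the off-diagonal entries. First I would observe that $\Sigma_1(k,\hat k)$ is the companion matrix of $x^3-\sigma_1(k)x^2+\sigma_2(k)x-\sigma_3(k)$, so by~\eqref{e-sigmak} and~\eqref{e-sigmaInv} the matrix $\Sigma_1(k,\hat k)$ determines the unordered triple $\{k\bra0,k\bra1,k\bra2\}$; in particular $\Sigma_1(k,\hat k)=\Sigma_1(\ell,\hat\ell)$ forces $\ell\in\{k\bra0,k\bra1,k\bra2\}$. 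The first (unnumbered) implication is then immediate: if $(\ell,\hat\ell)=(k\bra r,\hat k\bra r)$, then applying the Frobenius automorphism $x\mapsto x\bra r$ to each of the (finitely many) entries of $\Sigma_1,\Sigma_2$ — which are $\F_q$-valued polynomial expressions in $k,\hat k$ built from $\sigma_1,\sigma_2,\sigma_3$ and the products $k\hat k$, $k+\hat k$ — leaves them unchanged because each $\sigma_j$ is Frobenius-invariant and $\sigma_j((k\hat k)\bra r)=\sigma_j(k\hat k)$, $\sigma_j((k+\hat k)\bra r)=\sigma_j(k+\hat k)$. This establishes one direction of both (1) and (2) at once.

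For the converse, suppose $\big(\Sigma_1(k,\hat k),\Sigma_2(k,\hat k)\big)=\big(\Sigma_1(\ell,\hat\ell),\Sigma_2(\ell,\hat\ell)\big)$. As noted, $\ell=k\bra r$ for some $r\in\{0,1,2\}$; replacing $(k,\hat k)$ by $(k\bra r,\hat k\bra r)$ (which by the already-proved direction does not change the matrix pair) I may assume $\ell=k$. Now I would read off from the equality $\Sigma_2(k,\hat k)=\Sigma_2(k,\hat\ell)$ the constraints on $\hat\ell$: the $(1,2)$ and $(3,2)$ entries give $\sigma_3(\hat\ell)=\sigma_3(\hat k)$ and $\sigma_1(\hat\ell)=\sigma_1(\hat k)$, and the $(2,2)$ entry gives $\sigma_2(\hat\ell)=\sigma_2(\hat k)$; hence by~\eqref{e-sigmaInv} $\hat\ell\in\{\hat k\bra0,\hat k\bra1,\hat k\bra2\}$. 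So $\hat\ell=\hat k\bra s$ for some $s$, and the only remaining task is to pin down the admissible values of $s$ using the remaining entries of $\Sigma_2$ — the $(1,3)$ and $(3,3)$ entries, which involve $\sigma_2(k\hat k)$ and $\sigma_1(k\hat k)$ — together with the $(2,3)$ entry $-\sigma_3(k+\hat k)$. Concretely, the equality of these entries becomes $\sigma_1(k\hat k)=\sigma_1(k\hat\ell)$, $\sigma_2(k\hat k)=\sigma_2(k\hat\ell)$, $\sigma_3(k+\hat k)=\sigma_3(k+\hat\ell)$ (note $\sigma_3(k\hat\ell)=\sigma_3(k)\sigma_3(\hat\ell)=\sigma_3(k)\sigma_3(\hat k)$ automatically, and $\sigma_1(k+\hat\ell)=\sigma_2(k+\hat\ell)=$ their counterparts automatically from what we have). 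So I must determine for which $s\in\{0,1,2\}$ one has $k\cdot\hat k\bra s$ conjugate to $k\hat k$ (i.e.\ same $\sigma_1,\sigma_2$) and simultaneously $k+\hat k\bra s$ conjugate to $k+\hat k$.

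The main obstacle — and the only genuinely computational part — is this last step: translating "$k\hat k\bra s \sim k\hat k$ and $k+\hat k\bra s\sim k+\hat k$ in $\F_{q^3}$" into the arithmetic relation between $\hat k$ and $\hat k\bra s$. The key device is that an element of $\F_{q^3}$ and one of its Frobenius conjugates have the same minimal polynomial over $\F_q$; applying the Frobenius $x\mapsto x\bra t$ to $k\hat k\bra s$ for suitable $t$, one sees $k\hat k\bra s\sim k\hat k$ iff $k\hat k\bra s\in\{k\hat k,(k\hat k)\bra1,(k\hat k)\bra2\}=\{k\hat k,k\bra1\hat k\bra{s+1},k\bra2\hat k\bra{s+2}\}$ — wait, more carefully, $(k\hat k)\bra t=k\bra t\hat k\bra t$, so $k\hat k\bra s=k\bra t\hat k\bra{t}$ for some $t$ forces, comparing, either $s=0$ (when $t=0$) or a relation linking $k$ to $k\bra t$ and $\hat k\bra s$ to $\hat k\bra t$. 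In the generic case $\hat k\notin\{k\bra1,k\bra2\}$ one argues that $k$ is not fixed by any nontrivial Frobenius power (since $k\notin\F_q$), so matching $k\hat k\bra s$ to a conjugate of $k\hat k$ pairwise forces $s=0$, giving~(1); whereas if $\hat k=k\bra n$ the element $k\hat k=k\cdot k\bra n=k\bra{n}\cdot k=k\bra n\hat k\bra{?}$ admits the extra symmetry $s\leftrightarrow$ (swap roles), so that $k\hat k\bra s\sim k\hat k$ holds for the two families $(\ell,\hat\ell)=(k\bra r,k\bra{n+r})$ and $(k\bra{n+r},k\bra r)$, which is exactly~(2). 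I would carry this case split out by direct inspection of the three values $s=0,1,2$, using $\hat k=k\bra n$ to rewrite everything in terms of $k,k\bra1,k\bra2$ and matching minimal polynomials; the check that the $k+\hat k\bra s\sim k+\hat k$ condition does not further restrict (it is automatically satisfied on the listed families and excludes the spurious ones) is the place to be careful, but it is a finite verification over $r,s\in\{0,1,2\}$ rather than anything structural.
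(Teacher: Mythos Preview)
Your overall strategy coincides with the paper's: extract from the matrix entries the equalities $\sigma_j(\ell)=\sigma_j(k)$, $\sigma_j(\hat\ell)=\sigma_j(\hat k)$, $\sigma_j(\ell\hat\ell)=\sigma_j(k\hat k)$, $\sigma_j(\ell+\hat\ell)=\sigma_j(k+\hat k)$ and then use~\eqref{e-sigmaInv}. Your preliminary WLOG reduction to $\ell=k$ is a clean simplification the paper does not make, and the easy direction is handled correctly.

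The gap is in the ``only remaining task''. After reducing to $\ell=k$ and $\hat\ell=\hat k\bra{s}$, you try to argue from $k\hat k\bra{s}\in\{(k\hat k)\bra0,(k\hat k)\bra1,(k\hat k)\bra2\}$ alone by ``matching pairwise'', i.e.\ from $k\hat k\bra{s}=k\bra{t}\hat k\bra{t}$ inferring $k=k\bra{t}$ or similar. This is not valid: a single multiplicative identity in a field does not let you match factors. Nor is the final step a ``finite verification over $r,s\in\{0,1,2\}$'': for each choice of $s,t,u$ the conditions $k\hat k\bra{s}=k\bra{t}\hat k\bra{t}$ and $k+\hat k\bra{s}=k\bra{u}+\hat k\bra{u}$ are genuine constraints on the unknowns $k,\hat k$, not identities to be checked.

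What is missing is precisely the device the paper uses. Assume $s\neq0$. From $k+\hat k\bra{s}=k\bra{u}+\hat k\bra{u}$ one first rules out $u\in\{0,s\}$ (since $k,\hat k\notin\F_q$), and likewise $t\notin\{0,s\}$; hence $t=u$ is the unique third residue. Now the pair $\{k,\hat k\bra{s}\}$ has the same sum \emph{and} product as $\{k\bra{t},\hat k\bra{t}\}$, so by Vieta they coincide as sets. Since $t\neq0$ gives $k\neq k\bra{t}$, one is forced into $k=\hat k\bra{t}$, i.e.\ $\hat k=k\bra{3-t}$. This immediately yields the contradiction in case~(1) and, in case~(2) with $\hat k=k\bra{n}$, pins down $t=3-n$ and $s\equiv 3-2n\pmod 3$, which after undoing the WLOG is exactly the second family $(\ell,\hat\ell)=(k\bra{n+r},k\bra{r})$. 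Once you insert this Vieta step, your argument is complete and essentially identical to the paper's.
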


\begin{proof}
The first implication follows immediately from~\eqref{e-sigmaInv} and the definition of the matrices~$\Sigma_1,\Sigma_2$.
The additional backwards implication in~(2), where $\hat{k}=k\bra{n}$ and $(\ell,\,\hat{\ell})=(k\bra{n+r},\,k\bra{r})$ follows again from~\eqref{e-sigmaInv}  together with
the fact that in this case $\ell+\hat{\ell}=k\bra{n+r}+k\bra{r}=(\hat{k}+k)\bra{r}$ and $\ell\hat{\ell}=k\bra{n+r}k\bra{r}=(\hat{k}k)\bra{r}$.

Let us now assume $\big(\Sigma_1(k,\hat{k}),\,\Sigma_2(k,\hat{k})\big)=\big(\Sigma_1(\ell,\hat{\ell}),\Sigma_2(\ell,\hat{\ell})\big)$.
Then
\[
    \sigma_j(\ell)=\sigma_j(k),\quad \sigma_j(\hat{\ell})=\sigma_j(\hat{k}),\quad \sigma_j(\ell\hat{\ell})=\sigma_j(k\hat{k}),\quad
    \sigma_j(\ell+\hat{\ell})=\sigma_j(k+\hat{k})\text{ for }j=1,2,3.
\]
Indeed: The first two identities are clear from the matrices; for $j=1,2$ the identity for the product also follows from comparing the matrices, while for $j=3$ it is a consequence of
the multiplicativity of~$\sigma_3$;
finally, for $j=3$ the identity for the sum follows again from the matrices, for $j=1$ from the additivity of $\sigma_1$, and for $j=2$ from the identity
$\sigma_2(x+y)=\sigma_2(x)+\sigma_2(y)-\sigma_1(xy)+\sigma_1(x)\sigma_1(y)$, see~\cite[Lem.~10.3]{Men73}.

With the aid of~\eqref{e-sigmaInv} all of this tells us that
\begin{equation}\label{e-alldata}
  \ell=k\bra{i},\ \hat{\ell}=\hat{k}\bra{j},\ \ell+\hat{\ell}=(k+\hat{k})\bra{r},\ \ell\hat{\ell}=(k\hat{k})\bra{s}\text{ for some }i,j,r,s\in\{0,1,2\}.
\end{equation}
We have to show that $i=j=r=s$. Let us assume $i\neq j$.
Then $r\not\in\{i,j\}$ by the third identity in~\eqref{e-alldata} and $s\not\in\{i,j\}$ by the fourth identity.
Since $i,j,r,s$ attain only $3$ possible values, we conclude $r=s$.
Thus
\[
  k\bra{i}+\hat{k}\bra{j}=k\bra{r}+\hat{k}\bra{r}\ \text{ and }\ k\bra{i}\hat{k}\bra{j}=k\bra{r}\hat{k}\bra{r}.
\]
This implies $(x-k\bra{i})(x-\hat{k}\bra{j})=(x-k\bra{r})(x-\hat{k}\bra{r})$ and therefore $k\bra{i},\hat{k}\bra{j}\in\{k\bra{r},\hat{k}\bra{r}\}$.
Since $r\not\in\{i,j\}$ we thus have
\begin{equation}\label{e-khatk}
    k\bra{i}=\hat{k}\bra{r}\ \text{ and }\ \hat{k}\bra{j}=k\bra{r}.
\end{equation}
This in turn implies $\hat{k}=k\bra{i+3-r}$. Now we may argue as follows.
\begin{romanlist}
\item If $\hat{k}\not\in\{k,k\bra{1},k\bra{2}\}$, then the last identity is a contradiction.
         Hence in this case we arrive at $i=j=r=s$.
\item If $\hat{k}=k$ then~\eqref{e-khatk} leads to the contradiction $i=r$. Thus again $i=j=r=s$ by virtue of~\eqref{e-alldata}.
        This and~(i) establishes Part~(1) of the proposition.
\item Consider now Part~(2), thus $\hat{k}=k\bra{n}$ for some $n\in\{1,2\}$.
The case $i=j$ leads to the first option in the implication.
If $i\neq j$, then~\eqref{e-khatk} yields $i\equiv n+r~\mod 3$ and $j+n\equiv r~\mod 3$.
This means $(\ell,\hat{\ell})=(k\bra{n+r},k\bra{r})$, which is the second option. \qedhere
\end{romanlist}
\end{proof}

\begin{cor}\label{C-SizeS}
The set~$\cS$ has cardinality $|\cS|=\frac{1}{3}(|\cS'|+\frac{1}{2}|\cS''|)$,
where
\begin{align*}
   \cS'&=\big\{(k,\hat{k})\mid  1,\, k,\, \phi(k,\hat{k}) \text{ are linearly independent over $\F_q$ and }\hat{k}\not\in\{k\bra{1},k\bra{2}\}\big\},\\
   \cS''&=\big\{(k,k\bra{n})\mid k\in\F_{q^3}\setminus\F_q,\,n\in\{1,2\}\big\}.
\end{align*}
\end{cor}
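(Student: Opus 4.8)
The plan is to combine Theorem~\ref{T-MatSigma} with Proposition~\ref{P-SigmaInj} via a straightforward orbit-counting argument. By Theorem~\ref{T-MatSigma}, the set $\cS$ is the image of the parameter set
\[
  \cP:=\{(k,\hat k)\in\F_{q^3}^2\mid 1,k,\phi(k,\hat k)\text{ are linearly independent over }\F_q\}
\]
under the map $\Psi:(k,\hat k)\mapsto(I,\Sigma_1(k,\hat k),\Sigma_2(k,\hat k))$, so $|\cS|$ is the number of fibers of $\Psi|_{\cP}$. Since linear independence of $1,k,\phi(k,\hat k)$ forces $k\notin\F_q$ (and also $\hat k\notin\F_q$, as noted just before Proposition~\ref{P-SigmaInj}), every $(k,\hat k)\in\cP$ lies in $(\F_{q^3}\setminus\F_q)^2$, so Proposition~\ref{P-SigmaInj} applies verbatim to describe the fibers.

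First I would split $\cP$ into the two pieces dictated by Proposition~\ref{P-SigmaInj}: the ``generic'' part $\cP'=\{(k,\hat k)\in\cP\mid\hat k\notin\{k\bra1,k\bra2\}\}$, which equals the set $\cS'$ of the corollary, and the ``special'' part $\cP''=\cP\cap\{(k,k\bra n)\mid k\in\F_{q^3}\setminus\F_q,\ n\in\{1,2\}\}$. A key observation is that $\cP''=\cS''$: by~\eqref{e-phikk} we have $\phi(k,k\bra1)=\phi(k,k\bra2)=k^2$, so for a pair of the form $(k,k\bra n)$ the linear-independence condition becomes independence of $1,k,k^2$, which holds precisely when $k\notin\F_q$; hence every pair $(k,k\bra n)$ with $k\in\F_{q^3}\setminus\F_q$ and $n\in\{1,2\}$ already lies in $\cP$, giving $\cP''=\cS''$ and $|\cS''|=2(q^3-q)$. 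Note also that $\cP'$ and $\cP''$ are disjoint, and by part~(1) versus part~(2) of Proposition~\ref{P-SigmaInj} a fiber meeting $\cP'$ is entirely contained in $\cP'$ and likewise for $\cP''$, so the fiber count splits as a sum over the two pieces.

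Next I would count fibers within each piece. By Proposition~\ref{P-SigmaInj}(1), two pairs in $\cP'$ have the same image under $\Psi$ if and only if they lie in the same orbit of the Frobenius action $(k,\hat k)\mapsto(k\bra r,\hat k\bra r)$, $r\in\{0,1,2\}$; since $k\notin\F_q$, the three elements $k,k\bra1,k\bra2$ are distinct, so this $\Z/3$-action on $\cP'$ is free and each fiber has exactly $3$ preimages. Thus $\cP'$ contributes $|\cS'|/3$ fibers. By Proposition~\ref{P-SigmaInj}(2), for a pair $(k,k\bra n)\in\cP''$ the fiber consists of the (at most) six pairs $(k\bra r,k\bra{n+r})$ and $(k\bra{n+r},k\bra r)$ for $r\in\{0,1,2\}$; again freeness of Frobenius on $\F_{q^3}\setminus\F_q$ shows these six pairs are distinct, so each fiber inside $\cP''$ has exactly $6$ preimages, and $\cP''$ contributes $|\cS''|/6=\frac12|\cS''|/3$ fibers. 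Adding the two contributions gives $|\cS|=\frac13\big(|\cS'|+\frac12|\cS''|\big)$, which is the assertion of the corollary.

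I do not anticipate a serious obstacle here; the only points requiring a little care are (i) verifying the identification $\cP''=\cS''$ via~\eqref{e-phikk}, (ii) checking that no fiber straddles $\cP'$ and $\cP''$ — which is immediate since a pair $(\ell,\hat\ell)$ in the fiber of a generic $(k,\hat k)$ has $\hat\ell=\hat k\bra r\notin\{\ell\bra1,\ell\bra2\}$ — and (iii) confirming that the six pairs listed in part~(2) are genuinely distinct (not merely ``at most six''), which follows because $k\notin\F_q$ makes the map $r\mapsto k\bra r$ injective and, since also $n\not\equiv0$, the two families $(k\bra r,k\bra{n+r})$ and $(k\bra{n+r},k\bra r)$ are disjoint. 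With these checked, the formula $|\cS|=\frac13(|\cS'|+\frac12|\cS''|)$ follows.
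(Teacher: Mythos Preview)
Your proposal is correct and follows essentially the same route as the paper: both arguments split the parameter set according to whether $\hat{k}\in\{k\bra{1},k\bra{2}\}$, invoke~\eqref{e-phikk} to identify the special piece with~$\cS''$, and then read off the fiber sizes~$3$ and~$6$ from Proposition~\ref{P-SigmaInj}(1) and~(2). Your write-up is in fact more explicit than the paper's on the minor checks (that fibers do not straddle the two pieces, and that the six listed pairs in part~(2) are genuinely distinct), but the underlying argument is the same.
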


\begin{proof}
The summand $1/3|\cS'|$ follows from Proposition~\ref{P-SigmaInj}(1) along with the simple fact that, for any~$r$, the linear independence of $1,k,\phi(k,\hat{k})$
implies the linear independence of $1,k\bra{r},\phi(k\bra{r},\hat{k}\bra{r})$.
As for the second summand, note first that~\eqref{e-phikk} guarantees that $1,k,\phi(k,k\bra{n})$ are linearly independent for any $k\in\F_{q^3}\setminus\F_q$ and $n\in\{1,2\}$.
Thus Proposition~\ref{P-SigmaInj}(2) leads to the summand $1/6|\cS''|$.
\end{proof}

The following result appears implicitly in~\cite{Men77}.

\begin{prop}\label{P-CardSet}
Consider the set~$\cS'$ from Corollary~\ref{C-SizeS}. Then
$|\cS'|=(q^3-q)(q^3-q^2-q-2)$.
In particular $\cS'=\emptyset$ iff $q=2$.
\end{prop}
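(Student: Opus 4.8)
The plan is to count the pairs $(k,\hat{k})\in(\F_{q^3}\setminus\F_q)^2$ with $\hat{k}\notin\{k\bra{1},k\bra{2}\}$ for which $1,k,\phi(k,\hat{k})$ are linearly \emph{dependent}, and then subtract from the total. First I would fix $k\in\F_{q^3}\setminus\F_q$; there are $q^3-q$ such choices, and for each of them $1,k$ are linearly independent over~$\F_q$, so $\{\,\alpha+\beta k\mid\alpha,\beta\in\F_q\,\}$ is a well-defined $2$-dimensional $\F_q$-subspace~$\cL_k$ of~$\F_{q^3}$. The vector $1,k,\phi(k,\hat{k})$ is dependent precisely when $\phi(k,\hat{k})\in\cL_k$. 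So for each fixed~$k$ I must count how many $\hat{k}$ with $\hat{k}\notin\{k\bra{1},k\bra{2}\}$ satisfy $\phi(k,\hat{k})\in\cL_k$, subtract that count from $q^3$ (the number of all $\hat{k}\in\F_{q^3}$), and also remove the cases $\hat{k}\in\F_q$; then sum over~$k$ and over the three conjugates, being careful about overlaps with the excluded set $\{k\bra{1},k\bra{2}\}$.

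The key computation is to understand, for fixed~$k$, the fiber structure of the map $\hat{k}\mapsto\phi(k,\hat{k})=(k+\hat{k})(\sigma_1(\hat{k})-\hat{k})-\sigma_2(\hat{k})$. The plan is to view this as follows: set $g(x)=x^3-z_3x^2-z_2x-z_1\in\F_q[x]$ to be the minimal polynomial of~$\hat{k}$ when $\hat{k}\notin\F_q$ (so $z_j=\sigma_j(\hat{k})$), and note that $\sigma_1(\hat{k})-\hat{k}=\hat{k}\bra{1}+\hat{k}\bra{2}$ while $\sigma_2(\hat{k})=\hat{k}\bra{1}\hat{k}\bra{2}+\hat{k}(\hat{k}\bra{1}+\hat{k}\bra{2})$; substituting gives $\phi(k,\hat{k})=k(\hat{k}\bra{1}+\hat{k}\bra{2})-\hat{k}\bra{1}\hat{k}\bra{2}$, which one recognizes (cf.~\eqref{e-sigmak}) as the value at~$x=k$ of $-\bigl((x-\hat{k}\bra{1})(x-\hat{k}\bra{2})\bigr)$ up to the obvious sign bookkeeping. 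From~\eqref{e-phikk} we already know $\phi(k,k\bra{1})=\phi(k,k\bra{2})=k^2$. The heart of the argument is to show that, for generic~$k$, the condition $\phi(k,\hat{k})\in\cL_k$ forces~$\hat{k}$ to lie in a set of controlled size — the expected order of magnitude $q^3-q^2-q-2$ after normalization strongly suggests that exactly $q^2+q+1$ "bad" values of $\hat{k}$ occur for each good~$k$ (matching $q^3-q-(q^2+q+1)=q^3-q^2-q-2$), with the bad values being $\F_q$ together with roots of a quadratic-type condition parametrized by~$\cL_k$. I would make this precise by writing $\phi(k,\hat{k})=\alpha+\beta k$ and translating into polynomial identities in the $\sigma_j(\hat{k})$, then counting solutions.

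The main obstacle I anticipate is the case analysis arising when the various degeneracies collide: when $\hat{k}\in\F_q$ (so $\phi$ degenerates), when $\hat{k}\in\{k\bra{1},k\bra{2}\}$ (explicitly excluded, and giving $\phi=k^2$, which may or may not lie in~$\cL_k$ depending on whether $1,k,k^2$ are independent, i.e.\ on whether~$k$ generates $\F_{q^3}$ — which it does, since $k\notin\F_q$), and the boundary value $q=2$ where $\cS'$ is claimed empty. Getting the inclusion–exclusion exactly right across these overlapping exceptional sets, and confirming that the "bad $\hat{k}$" count is \emph{uniform} in~$k$ (rather than varying), is where the real work lies; I expect this to reduce to showing that a certain auxiliary polynomial in $\hat{k}$ (of degree $q^2+q+1$ or so over the relevant structure) has no unexpected multiple roots, using the irreducibility of the cubics involved as in the proof of Lemma~\ref{L-StabCfZ}. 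Once the per-$k$ count is pinned down as $q^3-q^2-q-2$ valid~$\hat{k}$, multiplying by $q^3-q$ gives $|\cS'|=(q^3-q)(q^3-q^2-q-2)$, and the final claim $\cS'=\emptyset\iff q=2$ is immediate since $q^3-q>0$ always and $q^3-q^2-q-2=(q+1)(q^2-2q+1)-3=\dots$ vanishes exactly at~$q=2$ among prime powers (indeed $8-4-2-2=0$, while it is positive for $q\ge3$).
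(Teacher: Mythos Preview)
Your overarching strategy---fix $k\in\F_{q^3}\setminus\F_q$ and count the admissible~$\hat{k}$---is exactly the paper's, but the execution has a real gap at ``the heart of the argument is\ldots I expect this to reduce to\ldots''. Working directly with $\phi(k,\hat{k})$ as a function of~$\hat{k}$ drags in $\sigma_2(\hat{k})$, which is not $\F_q$-linear in~$\hat{k}$, and your hoped-for ``auxiliary polynomial of degree $q^2+q+1$ or so'' does not describe a concrete computation. The paper bypasses this with a symmetry lemma of Menichetti \cite[Cor.~10.2]{Men73}: the triple $1,k,\phi(k,\hat{k})$ is $\F_q$-linearly independent if and only if $1,\hat{k},\phi(\hat{k},k)$ is. This swap is decisive, because with~$k$ fixed the expression
\[
   \phi(\hat{k},k)=(\hat{k}+k)\big(\sigma_1(k)-k\big)-\sigma_2(k)
\]
is \emph{affine} in~$\hat{k}$. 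Writing $\hat{k}=x_0+x_1k+x_2k^2$, the coordinate vectors of $1,\hat{k},\phi(\hat{k},k)$ become the columns of an explicit $3\times3$ matrix whose determinant is quadratic in $(x_0,x_1,x_2)$ and has exactly $q^2+q$ zeros in~$\F_q^3$ by a short case split on~$x_2$. Without the swap there is no visible route to a clean quadratic, and your plan does not supply one.

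Your numerical guess is also off: you claim $q^3-q-(q^2+q+1)=q^3-q^2-q-2$, but the left side equals $q^3-q^2-2q-1$. The correct accounting is that among all $q^3$ values of~$\hat{k}$ exactly $q^2+q$ yield dependence (the $q$ values $\hat{k}\in\F_q$ are already among these, so there is no separate removal step), leaving $q^3-q^2-q$ with independence; subtracting the two excluded values $\hat{k}\in\{k\bra{1},k\bra{2}\}$, which by~\eqref{e-phikk} always give independence, then yields $q^3-q^2-q-2$.
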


\begin{proof}
We follow  \cite[pp.~404]{Men77}.
If $1,\, k,\, \phi(k,\hat{k})$ are linearly independent, then $k\in\F_{q^3}\setminus\F_q$.
Fix any such~$k$.
Then $(1,\,k,\,k^2)$ forms a basis of the $\F_q$-vector space $\F_{q^3}$.
Write $\hat{k}=x_0+x_1k+x_2k^2,\,x_i\in\F_q$.
We want to determine the number of elements~$\hat{k}$ such that $1,k,\phi(k,\hat{k})$ are linearly independent.
By \cite[Cor.~10.2]{Men73} the linear independence of $1,k,\phi(k,\hat{k})$ is equivalent to the linear independence of $1,\hat{k},\phi(\hat{k},k)$.
We determine the coordinate vectors of $1,\hat{k},\phi(\hat{k},k)$ w.r.t.\ the basis $(1,\,k,\,k^2)$.
Let $f=x^3-a_2x^2-a_1x-a_0\in\F_q[x]$ be the minimal polynomial of~$k$.
Thus $a_0=\sigma_3(k),\, a_1=-\sigma_2(k)$, and $a_2=\sigma_1(k)$.
Then
\[
   \phi(\hat{k},k)=(\hat{k}+k)(a_2-k)+a_1=a_2x_0+a_1-a_0x_2+(a_2(1+x_1)-x_0-a_1x_2)k-(1+x_1)k^2.
\]
Hence the coordinate vectors of $1,\,\hat{k},\,\phi(\hat{k},k)$ are the columns of the matrix
\[
   M(x_0,x_1,x_2)=\begin{pmatrix}1&x_0&a_2x_0+a_1-a_0x_2\\0&x_1&a_2(1+x_1)-x_0-a_1x_2\\0&x_2&-(1+x_1)\end{pmatrix},
\]
and we need to determine the number of points $(x_0,x_1,x_2)\in\F_q^3$ such that $\det M(x_0,x_1,x_2)\not=0$.
It is easy to verify that the curve given by
\[
    \det M(x_0,x_1,x_2)=(x_0+a_1x_2)x_2-(x_1+1)(x_1+a_2x_2)=0
\]
has exactly $q^2+q$ solutions
($2q$ solutions for $x_2=0$ and $(q-1)q$ solutions for $x_2\neq0$).
All of this tells us that for every $k\in\F_{q^3}\setminus\F_q$ there exist $q^3-q^2-q$ elements~$\hat{k}$ such that
$1,\,k,\,\phi(k,\hat{k})$ are linearly independent.
Finally, if $\hat{k}=k\bra{n}$ for some $n\in\{1,2\}$, then $1,\,k,\,\phi(k,\hat{k})$ are linearly independent by~\eqref{e-phikk}.
Thus we have to subtract these two elements from our count.
This yields the desired result.
\end{proof}

Now the proof of Theorem~\ref{T-FinalCount} follows quickly.

\medskip
\noindent{\it Proof of Theorem~\ref{T-FinalCount}.}
Since $|\cS''|=2(q^3-q)$, the stated cardinality of the set~$\cS$ follows from Proposition~\ref{P-CardSet} and Corollary~\ref{C-SizeS}.
The proportion of $[3\times3;3]$-MRD codes is now immediate with Theorem~\ref{T-Reduction}.
\hfill$\square$

\section{Relation to $3$-Dimensional Semifields}\label{S-Semifields}
In this section we recount the relation between MRD codes and semifields and go on to survey the paper~\cite{Men73} by Menichetti as needed for the proof of Theorem~\ref{T-MatSigma}.
This will also provide further insight into the various types of semifields parametrized by $(k,\hat{k})$ as in Theorem~\ref{T-MatSigma}.

Menichetti used this theorem to determine the number of isomorphism classes of the related non-associative semifields, which are in this case $3$-dimensional over~$\F_q$.
It turns out~\cite[p.~400]{Men77} that for $q\geq3$ the number of isomorphism classes of non-associative semifields is
\[
   \frac{q^3-q^2+q-10}{3}\text{ if }q\equiv1~\mod~3 \quad \text{ and }\quad \frac{q^3-q^2+q-6}{3}\text{ if }q\not\equiv1~\mod~3.
\]
Since these cardinalities agree with the number of isomorphism classes of $3$-dimensional twisted fields, determined by Kaplansky~\cite[p.~77]{Ka75}, Menichetti thereby proved that all
$3$-dimensional non-associative semifields are twisted fields.

We start by recalling the notion of a semifield.

\begin{defi}\label{D-SF}
A \emph{finite semifield} is a triple $(\cF,+,\circ)$ with a finite set~$\cF$ and binary operations~$+$ and~$\circ$ on~$\cF$ satisfying the following conditions.
\begin{romanlist}
\item $(\cF,+)$ is an abelian group,
\item $\cF$ has no left or right zero divisors, that is: $a\circ b=0\Longrightarrow a=0\text{ or }b=0$.
\item Both distributivity laws hold: $a\circ(b+c)=a\circ b+a\circ c$ and $(a+b)\circ c=a\circ c+b\circ c$.
\item There exists an identity element: $1\circ a=a\circ 1=a$ for all $a\in\cF$.
\end{romanlist}
A semifield without identity element is called a \emph{pre-semifield}.
\end{defi}

It is easy to see that every finite semifield contains a field, say~$\F_q$, and is an algebra over~$\F_q$.
Thus  $(\lambda a)\circ b=\lambda(a\circ b)=a\circ(\lambda b)$ for all $a,b\in\cF$ and $\lambda\in\F_q$.
If~$\cF$ is $n$-dimensional over~$\F_q$, we call~$\cF$ an \emph{$n$-dimensional division algebra}.

Note that a (pre-) semifield is not necessarily associative or commutative.
Clearly, an associative finite semifield is a finite skew field and thus a finite field by Wedderburn's Theorem, hence commutative.
Later in Proposition~\ref{P-Fcommass2} we will encounter instances of commutative non-associative semifields.

It is well-known from finite geometry that $[n\times n;n]_q$-MRD codes are in one-one-correspondence to $n$-dimensional division algebras over~$\F_q$.
For sake of completeness and further use we provide the correspondence below.
Most properties are easy to verify.
For details see also~\cite[p.~220]{Dem68} or~\cite[p.~134]{LaPo11} or other sources on finite geometry.

\begin{prop}\label{P-SFMRD}
\begin{alphalist}
\item Let $A_1,\ldots,A_n\in\F_q^{n\times n}$ generate an $[n\times n;n]$-MRD code.
        Define the linear map
        $\F_q^n\longrightarrow \subspace{A_1,\ldots,A_n},\ x\longmapsto M_x$,
        where $M_x$ is the unique matrix in $\subspace{A_1,\ldots,A_n}$ whose first column is~$x$.
        On~$\F_q^n$ we define
        \[
             x\circ y:=M_x y,
        \]
        where the right hand side is ordinary matrix-vector multiplication.
        Then $(\F_q^n,+,\circ)$ is a pre-semifield.
        It is a semifield, and hence an $n$-dimensional division algebra, iff the code $\subspace{A_1,\ldots,A_n}$ contains the identity matrix.
\item Let~$\cF$ be an $n$-dimensional division algebra over~$\F_q$, and $u_1,\ldots,u_n$ be a basis of~$\cF$ over~$\F_q$.
        For every $a\in\cF$ define the (\,$\F_q$-linear) map $m_a:\cF\longrightarrow\cF,\ y\longmapsto a\circ y$.
        Let~$M_h$ be the matrix representation of $m_{u_h}$ w.r.t.\ to the basis $u_1,\ldots,u_n$.
        Then the subspace $\subspace{M_1,\ldots,M_n}$ is an $[n\times n;n]$-MRD code containing the identity matrix.
        The matrices $M_1,\ldots,M_n$ are the \emph{structure matrices} of~$\cF$.
\end{alphalist}
\end{prop}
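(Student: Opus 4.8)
The statement is the standard dictionary between $[n\times n;n]$-MRD codes and $n$-dimensional division algebras, and the plan is to verify parts~(a) and~(b) separately, in each case translating the rank condition into the absence of zero divisors.

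For part~(a), the first step is to record that the MRD hypothesis on $C:=\subspace{A_1,\dots,A_n}$ means $\dim C=n$ and every nonzero $M\in C$ is invertible, since $d_R(C)=n$ forces $\rk M=n$. I would then show that the $\F_q$-linear ``first column'' map $C\to\F_q^n$ is injective --- if $M,M'\in C$ have the same first column then $M-M'\in C$ has zero first column, hence is singular, hence is zero --- and therefore bijective for dimension reasons; this is exactly what makes $x\mapsto M_x$ well defined and $\F_q$-linear. With that in hand the pre-semifield axioms are immediate: $(\F_q^n,+)$ is an abelian group; both distributive laws follow from $\F_q$-linearity of $x\mapsto M_x$ together with linearity of matrix--vector multiplication; and $x\circ y=0$ with $x\neq0$ forces $M_xy=0$ with $M_x$ invertible, so $y=0$, i.e.\ there are no zero divisors. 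For the identity criterion: if $I\in C$ then $M_{e_1}=I$, where $e_1$ is the first standard basis vector (so that $I$ has first column $e_1$), and then $e_1\circ y=Iy=y$ and $x\circ e_1=M_xe_1=x$, so $e_1$ is a two-sided identity; conversely, if $(\F_q^n,+,\circ)$ has an identity $e$, then $e\circ x=M_ex=x$ for all $x$ forces $M_e=I$, whence $I\in C$. Finally, $M_{\lambda x}=\lambda M_x$ yields $(\lambda x)\circ y=\lambda(x\circ y)=x\circ(\lambda y)$, so the resulting semifield is an algebra of dimension $n$ over $\F_q$.

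For part~(b), starting from an $n$-dimensional division algebra $\cF$ with basis $u_1,\dots,u_n$, each left-multiplication $m_a\colon y\mapsto a\circ y$ is $\F_q$-linear (distributivity together with the scalar law), and $a\mapsto m_a$ is $\F_q$-linear; hence $\subspace{M_1,\dots,M_n}$ is precisely $\{M_a\mid a\in\cF\}$, writing $M_a$ for the matrix of $m_a$ with respect to $u_1,\dots,u_n$. I would check linear independence of $M_1,\dots,M_n$ by noting that $\sum_h\lambda_hM_h=0$ means $m_{\sum_h\lambda_hu_h}=0$, and evaluating at the identity $1$ gives $\sum_h\lambda_hu_h=0$, so all $\lambda_h=0$; thus $\dim\subspace{M_1,\dots,M_n}=n$. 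Next, for $a\neq0$ the map $m_a$ is injective by the zero-divisor-free property, hence bijective since $\cF$ is finite-dimensional, so $M_a$ has full rank; therefore every nonzero codeword is invertible and $\subspace{M_1,\dots,M_n}$ is an $[n\times n;n]$-MRD code. That it contains $I$ is clear, since $m_1=\id$ has matrix $I_n$, and expressing $1$ in the basis $u_1,\dots,u_n$ exhibits $I_n$ as an $\F_q$-combination of $M_1,\dots,M_n$.

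There is essentially no obstacle here; the only point that genuinely uses the MRD hypothesis --- and the only place where a short argument rather than pure bookkeeping is needed --- is the well-definedness of $x\mapsto M_x$ in part~(a), i.e.\ that the first-column map is a bijection on $C$. In the final write-up I would state that step carefully and merely indicate the remaining axiom verifications.
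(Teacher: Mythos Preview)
Your proposal is correct and matches what the paper does: the paper gives no formal proof of this proposition, remarking only that most properties are easy to verify and pointing to \cite{Dem68,LaPo11}; the sole step it spells out, in the paragraph following the statement, is precisely your well-definedness argument for $x\mapsto M_x$ via the injectivity of the first-column map on the MRD code.
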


Note that the matrices~$M_x$ in~(a) are well-defined.
Indeed, since every nonzero linear combination has rank~$n$, the first columns of the matrices in $\subspace{A_1,\ldots,A_n}$ are distinct and thus
assume each vector of~$\F_q^n$ exactly once.
More generally, the matrix~$M_x$  could be replaced by a matrix $\phi(x)$, where~$\phi$
is any isomorphism between~$\F_q^n$ and $\subspace{A_1,\ldots,A_n}$. In that case multiplication would read as
$x\circ y=\phi(x)y$.

The above relationship becomes particularly nice for the MRD codes generated by the matrix triples in the set~$\cS$ from Theorem~\ref{T-Reduction}.
Denote by $e_1,e_2,e_3\in\F_q^3$ the standard basis vectors.

\begin{defi}\label{D-SF3}
Let $(I,C_f,Z)\in\cS$ and
\begin{equation}\label{e-CfZ}
    C_f=\begin{pmatrix}0&0&a\\1&0&b\\0&1&c\end{pmatrix},\
    Z=\begin{pmatrix}0&z_1&z'_1\\0&z_2&z'_2\\1&z_3&z'_3\end{pmatrix}.
\end{equation}
We denote the associated $3$-dimensional division algebra by $\cF(I,C_f,Z)$.
For $x=(x_1,x_2,x_3)\in\F_q^3$ the matrix $M_x$ from Proposition~\ref{P-SFMRD}(a) is given by $M_x=x_1I+x_2C_f+x_3Z$.
Furthermore, $e_1$ is the identity of~$\cF$ and
\[
  e_2\circ e_2=e_3,\quad e_2\circ e_3=\begin{pmatrix}a\\b\\c\end{pmatrix},\quad
  e_3\circ e_2=\begin{pmatrix}z_1\\z_2\\z_3\end{pmatrix},\quad
  e_3\circ e_3=\begin{pmatrix}z'_1\\z'_2\\z'_3\end{pmatrix}.
\]
\end{defi}

Note in particular $(e_2\circ e_2)\circ e_2=(z_1,z_2,z_3)\T$, while $e_2\circ(e_2\circ e_2)=(a,b,c)\T$,
hence we have no well-defined notion of third (or higher) powers unless the semifield is associative.
The latter as well as commutativity are easily characterized.

\begin{prop}\label{P-Fcommass}
Let $(I,C_f,Z)\in\cS$ and $C_f,Z$ be as in~\eqref{e-CfZ}.
\begin{alphalist}
\item $\cF(I,C_f,Z)$ is commutative iff $(a,b,c)=(z_1,z_2,z_3)$.
\item $\cF(I,C_f,Z)$ is associative iff $Z=C_f^2$, in which case $\cF(I,C_f,Z)$ is the field $\F_{q^3}$.
\end{alphalist}
\end{prop}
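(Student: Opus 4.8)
The plan is to work directly from the multiplication table recorded in Definition~\ref{D-SF3}, using the fact that $e_1$ is the identity, so that bilinearity reduces every verification to checking statements about the products $e_i\circ e_j$ for $i,j\in\{2,3\}$.

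For part (a), I would argue that $\cF(I,C_f,Z)$ is commutative if and only if $e_i\circ e_j=e_j\circ e_i$ for all $i,j$. Since $e_1$ is a two-sided identity, the only nontrivial constraint is $e_2\circ e_3=e_3\circ e_2$, and $e_2\circ e_2$, $e_3\circ e_3$ are automatically symmetric. By Definition~\ref{D-SF3} this reads $(a,b,c)\T=(z_1,z_2,z_3)\T$, giving the claim. One should note that commutativity of the generators, together with bilinearity, does force commutativity on all of $\F_q^3$, which is the only point requiring a sentence of justification.

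For part (b), I would first observe that if $Z=C_f^2$, then $\subspace{I,C_f,Z}=\subspace{I,C_f,C_f^2}=\F_q[C_f]$, which (since $f\in\cI$ is irreducible) is a field isomorphic to $\F_{q^3}$; hence $\cF(I,C_f,Z)\cong\F_{q^3}$ and in particular is associative. For the converse, assume $\cF(I,C_f,Z)$ is associative. Then $(e_2\circ e_2)\circ e_2=e_2\circ(e_2\circ e_2)$; by Definition~\ref{D-SF3} the left side is $(z_1,z_2,z_3)\T$ and the right side is $(a,b,c)\T$, so $(a,b,c)=(z_1,z_2,z_3)$, i.e.\ the second columns of $Z$ and $C_f^2$ agree (recall $C_f^2$ has second column $(a,b,c)\T$). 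Next I would use associativity in the form $(e_2\circ e_2)\circ e_3=e_2\circ(e_2\circ e_3)$: the left side is $Z e_3=(z_1',z_2',z_3')\T$, the third column of $Z$, while the right side is $M_{e_2\circ e_3}e_2$, i.e.\ $(aI+bC_f+cC_f^2)e_2$, which one computes to be exactly $C_f^3 e_2$, the third column of $C_f^3=aI+bC_f+cC_f^2$; but $C_f^3=C_f\cdot C_f^2$, so this third column equals $C_f\cdot(\text{third column of }C_f^2)$, which is the third column of $C_f^2$ only if $C_f^2$ is already stable under this — more cleanly, $(aI+bC_f+cC_f^2)e_2 = C_f^2(C_f e_2)=C_f^2 e_3$, the third column of $C_f^2$. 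Hence the third columns of $Z$ and $C_f^2$ agree. Since both $Z$ and $C_f^2$ have first column $e_3$, all three columns coincide and $Z=C_f^2$.

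I expect the main obstacle to be the bookkeeping in the converse direction of (b): correctly identifying $M_{e_2\circ e_3}$ as $aI+bC_f+cC_f^2$ via Proposition~\ref{P-SFMRD}(a) (because $e_2\circ e_3=(a,b,c)\T$ is the second column of $C_f^2$, hence $M_{e_2\circ e_3}$ is that element of $\subspace{I,C_f,Z}$ whose first column is $(a,b,c)\T$ — and one must check this element is indeed $aI+bC_f+cC_f^2$, i.e.\ that $C_f^3$ has first column $(a,b,c)\T$, which follows from $C_f^3=aI+bC_f+cC_f^2$ via the Cayley--Hamilton relation $f(C_f)=0$), and then recognizing that $(aI+bC_f+cC_f^2)e_2=C_f^2 e_3$. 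Once that identification is in place the rest is immediate. An alternative, perhaps cleaner route for the converse is: associativity makes $\cF$ a $3$-dimensional associative division algebra over $\F_q$, hence (Wedderburn) the field $\F_{q^3}$; its structure matrices with respect to a suitable basis are $I,C_f,C_f^2$, and tracking the basis shows $\subspace{I,C_f,Z}=\subspace{I,C_f,C_f^2}$, forcing $Z=C_f^2$ by comparing first columns. I would present the direct column computation as the main argument and mention the Wedderburn shortcut as a remark.
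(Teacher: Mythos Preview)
Your argument for~(a) and for the forward direction of~(b) is correct and matches the paper. For the converse of~(b), your first associativity relation $(e_2\circ e_2)\circ e_2=e_2\circ(e_2\circ e_2)$ is in fact neater than the paper's choice: it yields $(z_1,z_2,z_3)=(a,b,c)$ in one line, whereas the paper compares $e_2\circ(e_3\circ e_2)$ with $(e_2\circ e_3)\circ e_2$ and must solve a $2\times2$ linear system (invoking $\det(cI-C_f)\neq0$) to reach the same conclusion.

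Your second step, however, contains a genuine slip. By Definition~\ref{D-SF3} one has $M_x=x_1I+x_2C_f+x_3Z$ for $x=(x_1,x_2,x_3)\T$, so
\[
   M_{e_2\circ e_3}=M_{(a,b,c)}=aI+bC_f+cZ,
\]
\emph{not} $aI+bC_f+cC_f^2$; at this stage $C_f^2$ is not known to lie in $\subspace{I,C_f,Z}$, and your ``first column of $C_f^3$'' justification is therefore circular. (Also note that $e_2\circ(e_2\circ e_3)=M_{e_2}(e_2\circ e_3)$ by definition, not $M_{e_2\circ e_3}e_2$; the latter equals $(e_2\circ e_3)\circ e_2$, which only agrees after a further appeal to associativity together with your first step.) The repair is immediate and simpler than the detour you attempt: compute the right-hand side directly as
\[
   e_2\circ(e_2\circ e_3)=M_{e_2}(e_2\circ e_3)=C_f\,(a,b,c)\T=C_f(C_fe_3)=C_f^2e_3,
\]
the third column of $C_f^2$. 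Equating with the left-hand side $(z'_1,z'_2,z'_3)\T$ settles the remaining column, and since both $Z$ and $C_f^2$ have first column~$e_3$, you get $Z=C_f^2$. This is exactly what the paper's first two displayed products encode, so with this fix your argument and the paper's coincide.
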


\begin{proof}
(a) is immediate because the products between the standard basis vectors fully determine the multiplication in~$\cF$.
\\
(b) We compute
\[
  e_2\circ(e_2\circ e_3)=\!\begin{pmatrix}ac\\a\!+\!bc\\b\!+\!c^2\end{pmatrix}\!,\:
  (e_2\circ e_2)\circ e_3=\!\begin{pmatrix}z'_1\\z'_2\\z'_3\end{pmatrix}\!,\:
  e_2\circ(e_3\circ e_2)=\!\begin{pmatrix}az_3\\z_1\!+\!bz_3\\z_2\!+\!cz_3\end{pmatrix}\!,\:
  (e_2\circ e_3)\circ e_2=\!\begin{pmatrix}cz_1\\a\!+\!cz_2\\b\!+\!cz_3\end{pmatrix}.
\]
Suppose that $\cF(I,C_f,Z)$ is associative.
Comparing the last two expressions we obtain $z_2=b$ and
\[
  \begin{pmatrix}c&-a\\1&b\end{pmatrix}\begin{pmatrix}z_1\\z_3\end{pmatrix}=\begin{pmatrix}0\\a+bc\end{pmatrix}.
\]
Note that $a+bc=-\det(cI-C_f)\neq0$.
Hence $(z_1,z_3)=(a,c)$ and further
\[
    Z=\begin{pmatrix}0&a&ac\\0&b&a+bc\\1&c&b+c^2\end{pmatrix}=C_f^2.
\]
Conversely, if $Z=C_f^2$, then $\subspace{I,C_f,C_f^2}$ is the field~$\F_{q^3}$, and thus closed under multiplication.
As a consequence, the products $M_xM_y$ are in $\subspace{I,C_f,C_f^2}$ for any $x,y\in\F_q^3$.
Since the first column of $M_xM_y$ is given by $x\circ y$, we conclude that
the map $(\F_q^3,+,\circ)\longrightarrow\subspace{I,C_f,C_f^2},\ x\longmapsto M_x$ is multiplicative, i.e., an isomorphism of semifields, and thus $\cF(I,C_f,C_f^2)$ is associative.
\end{proof}

In the relation between semifields and MRD codes we made use of the multiplication maps~$m_x$, based on the fixed left factor~$x$.
It will be crucial to also consider the multiplication maps with given right factors.
This leads to a dual matrix triple (not to be confused with the trace-dual of MRD codes).

\begin{prop}[see also \mbox{\cite[Prop.~5]{Men73}}]\label{P-Opp}
Consider the division algebra $\cF(I,C_f,Z)$ from Definition~\ref{D-SF3} and the $\F_q$-linear maps
\[
  m_x:\cF\longrightarrow\cF,\ y\longmapsto x\circ y,\qquad
  \hat{m}_x:\cF\longrightarrow\cF,\ y\longmapsto y\circ x.
\]
Then the matrix representations of $m_{e_2}$ and $m_{e_3}$ w.r.t.\ the standard basis are $C_f$ and~$Z$, respectively, and the matrix representations of
$\hat{m}_{e_2}$ and $\hat{m}_{e_3}$ are
\begin{equation}\label{e-DualTriple}
   C_g:=\begin{pmatrix}0&0&z_1\\1&0&z_2\\0&1&z_3\end{pmatrix}\ \text{ and }\
   \hat{Z}=\begin{pmatrix}0&a&z'_1\\0&b&z'_2\\1&c&z'_3\end{pmatrix},
\end{equation}
respectively (using column vector notation throughout).
As a consequence, $\subspace{I,C_g,\hat{Z}}$ is an MRD code as well.
Thus $(I,C_g,\hat{Z})\in\cS$ and $g=x^3-z_3x^2-z_2x-z_1\in\F_q[x]$ is irreducible.
We call $(I,C_g,\hat{Z})$ the \emph{dual triple} to $(I,C_f,Z)$.
\end{prop}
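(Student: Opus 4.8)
The plan is to unwind the definitions of the multiplication maps in the basis-free language of Proposition~\ref{P-SFMRD} and then simply read off the matrices column by column. Recall that by Definition~\ref{D-SF3} we have $M_x = x_1 I + x_2 C_f + x_3 Z$, so the map $m_x\colon y\mapsto x\circ y$ has matrix $M_x$ with respect to the standard basis; in particular $m_{e_2}$ has matrix $M_{e_2}=C_f$ and $m_{e_3}$ has matrix $M_{e_3}=Z$, which is the first assertion. For the \emph{right}-multiplication maps $\hat m_x\colon y\mapsto y\circ x$, the $j$-th column of the matrix of $\hat m_x$ is $\hat m_x(e_j)=e_j\circ x = M_{e_j}x$. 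Thus the matrix of $\hat m_{e_2}$ has columns $M_{e_1}e_2,\,M_{e_2}e_2,\,M_{e_3}e_2$, i.e. $I e_2,\,C_f e_2,\,Z e_2$, and the matrix of $\hat m_{e_3}$ has columns $I e_3,\,C_f e_3,\,Z e_3$.

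Next I would just compute these six column vectors from the explicit forms of $C_f$ and $Z$ in~\eqref{e-CfZ}. One gets $I e_2=(0,1,0)\T$, $C_f e_2=(0,0,1)\T$, $Z e_2 = (z_1,z_2,z_3)\T$, and likewise $I e_3=(0,0,1)\T$, $C_f e_3=(a,b,c)\T$, $Z e_3=(z_1',z_2',z_3')\T$. Assembling the first three as columns produces exactly the matrix $C_g$ displayed in~\eqref{e-DualTriple}, and assembling the second three produces $\hat Z$ in~\eqref{e-DualTriple}. (Equivalently, one may observe that $\hat m_{e_2}$ and $\hat m_{e_3}$ are obtained from $C_f$ and $Z$ by transposing the roles of rows/columns in the way already recorded for $e_2\circ e_2,\,e_3\circ e_2,\,e_3\circ e_3$ in Definition~\ref{D-SF3}.)

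Finally I would argue the structural consequences. Since $\cF=\cF(I,C_f,Z)$ is a division algebra, no nonzero element is a right zero divisor, so every nonzero $\hat m_x$ is injective, hence invertible; therefore the space $\langle I,C_g,\hat Z\rangle = \{\text{matrix of }\hat m_x \mid x\in\F_q^3\}$ consists of the zero matrix together with invertible matrices, i.e. it is a $[3\times3;3]$-MRD code (here one also notes $\dim\langle I,C_g,\hat Z\rangle=3$ because $\hat m$ is injective as $e_1$ is a two-sided identity, so the three structure matrices are linearly independent). By Proposition~\ref{P-SFMRD} this MRD code contains $I$ — indeed $\hat m_{e_1}=\id$ — and it is already in the normalized shape of~\eqref{e-S2}, so $(I,C_g,\hat Z)\in\cS$. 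Finally, since $C_g$ is a companion matrix lying in an MRD code, its characteristic polynomial $g=x^3-z_3x^2-z_2x-z_1$ must be irreducible (a companion matrix with reducible characteristic polynomial has a nontrivial invariant subspace and hence a singular nonzero $\F_q$-combination with $I$, contradicting the MRD property; this is exactly Remark~\ref{R-CharPoly} applied to $(I,C_g,\hat Z)$).

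There is no real obstacle here: the proposition is essentially a bookkeeping statement once one recognizes that passing from $m_x$ to $\hat m_x$ is the same as passing from "rows of the structure-matrix array" to "columns." The only point requiring a word of care is the very last claim that $g$ is irreducible, which one should justify by invoking the rational-canonical-form observation of Remark~\ref{R-CharPoly} (or Proposition~\ref{P-StabCompMat}'s underlying fact that $\langle I, C_g, C_g^2\rangle$ is a field exactly when $g$ is irreducible) rather than by a direct computation.
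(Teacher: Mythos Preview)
Your proof is correct and follows essentially the same approach as the paper: compute the columns of $\hat m_{e_2}$ and $\hat m_{e_3}$ as $e_j\circ e_2$ and $e_j\circ e_3$, then deduce the MRD property from the absence of zero divisors. The only cosmetic difference is that the paper argues irreducibility of~$g$ directly (degree~$3$ with no $\F_q$-roots), whereas you invoke Remark~\ref{R-CharPoly}; both are equivalent one-line observations.
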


\begin{proof}
The matrix representations of $m_{e_2}$ and $m_{e_3}$ are clear from the definition of~$\circ$.
For the matrix representation of $\hat{m}_{e_2}$ notice that its columns are the coordinate vectors of $e_1\circ e_2,\,e_2\circ e_2$, and $e_3\circ e_2$.
This results in the matrix~$C_g$. Similarly one verifies~$\hat{Z}$.
The MRD property of $\subspace{I,C_g,\hat{Z}}$ follows from the void of zero divisors in $\cF(I,C_f,Z)$.
The latter then also implies that~$g$ has no roots in~$\F_q$, thus is irreducible.
\end{proof}

Note that, by definition, $\hat{m}_{e_2}(y)=y\circ e_2=C_gy$ and $\hat{m}_{e_3}(y)=y\circ e_3=\hat{Z}y$ and all $y\in\F_q^3$.
Strictly speaking, the dual triple $(I,C_g,\hat{Z})$ are the structure matrices of the \emph{opposite semifield} defined by $x\circ' y:=y\circ x$.
This leads to $e_2\circ' y=C_gy$ and $e_3\circ' y=\hat{Z}y$  and avoids changing the order of the factors.

For the rest of this section we will also have to consider values for $x=(x_1,x_2,x_3)$ in the extension field $\F_{q^3}$.
We denote by $\P^2(q^3)$ and $\P^2(q)$ the projective planes over~$\F_{q^3}$ and $\F_q$, respectively.
Then $\P^2(q)\subseteq\P^2(q^3)$, and we call the points in $\P^2(q)$ the $\F_q$-\emph{rational points}.

One easily verifies~\cite[Eq.~(3)]{Men73} that a triple $(I,C_f,Z)\in\cS$ and its dual triple $(I,C_g,\hat{Z})$ satisfy
\begin{equation}\label{e-StructMat}
  x_1I+x_2C_f+x_3Z=(x\,|\,C_gx\,|\,\hat{Z}x) \text{  for any }x=(x_1,x_2,x_3)\T\in\F_{q^3}^3.
\end{equation}

We are now ready to give an outline of the proof of Theorem~\ref{T-MatSigma}.
For the next results we fix the following setting.

\begin{nota}\label{N-Notation}
We fix a triple $(I,C_f,Z)\in\cS$ with dual triple $(I,C_g,\hat{Z})\in\cS$ and where
$C_f,\,Z,\,C_g,\,\hat{Z}$ are as in~\eqref{e-CfZ} and~\eqref{e-DualTriple}.
\begin{romanlist}
\item Denote by
        $\alpha\bra{i},\,\beta\bra{i},\hat{\alpha}\bra{i}\in\F_{q^3},\,i=0,1,2,$ the eigenvalues of~$C_f,\,Z,\,C_g$, respectively.
        The eigenvalues do indeed appear in conjugate triples because the characteristic polynomial of every nonzero matrix in a $[3\times3;3]$-MRD code is irreducible.
\item Let $v=(v_1,v_2,v_3)\in\F_{q^3}^3$ be an eigenvector of~$C_g$ w.r.t.\ the eigenvalue~$\hat{\alpha}\bra{0}$.
        Then $v\bra{i}:=(v_1^{[i]},v_2^{[i]},v_3^{[i]})$ is an eigenvector of~$C_g$ w.r.t.~$\hat{\alpha}\bra{i}$ for $i=0,1,2$.
\item Let~$\Gamma$ be the curve in $\P^2(q^3)$ given by the equation $F(x_1,x_2,x_3)=0$, where
        \[
          F(x_1,x_2,x_3)=\det(x_1I+x_2 C_f+x_3 Z)\in\F_q[x_1,x_2,x_3].
        \]
\end{romanlist}
\end{nota}

We have the following simple properties.

\begin{prop}[\mbox{\cite[Props.~3, 6, 7]{Men73}}]\label{P-Facts}
Consider the setting from Notation~\ref{N-Notation}.
\begin{alphalist}
\item The curve~$\Gamma$ has no $\F_q$-rational points.
\item The points $w\bra{i}:=(-\alpha\bra{i},1,0)$ and $\tilde{w}\bra{i}=(-\beta\bra{i},0,1),\,i=0,1,2,$ are the intersection points of~$\Gamma$ with the
         curve given by $x_3=0$ and $x_2=0$, respectively.
\item The points $v\bra{i},\,i=0,1,2,$ are on~$\Gamma$. Furthermore, $v_2\neq0\neq v_3$.
\end{alphalist}
\end{prop}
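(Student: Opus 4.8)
\textbf{Plan of proof of Proposition~\ref{P-Facts}.}

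The plan is to establish the three statements in the natural order (a), (b), (c), using the characterization~\eqref{e-hatVq} of the MRD property together with the structure-matrix identities~\eqref{e-StructMat} relating $(I,C_f,Z)$ to its dual triple $(I,C_g,\hat{Z})$.

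First, for part~(a): by definition $(I,C_f,Z)\in\cS\subseteq\hat{V}_q$, so by~\eqref{e-hatVq} we have $\det(x_1I+x_2C_f+x_3Z)\neq0$ for every $(x_1,x_2,x_3)\in\F_q^3\setminus0$. Since the polynomial $F$ is $\F_q$-homogeneous and its nonvanishing on $\F_q^3\setminus0$ is exactly the statement that $\Gamma$ carries no point whose homogeneous coordinates can be chosen in $\F_q$, this is immediate. I would phrase it as: $F$ has no zeros in $\F_q^3\setminus0$, hence $\Gamma(\F_q)=\emptyset$.

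Next, for part~(b): setting $x_3=0$ reduces $F(x_1,x_2,0)=\det(x_1I+x_2C_f)$, which up to sign is $x_2^3\,\chi_{C_f}(-x_1/x_2)$, a binary form whose projective zeros in $\P^1(q^3)$ are the points $(-\lambda:1)$ for $\lambda$ ranging over the roots of $\chi_{C_f}$, i.e. over the eigenvalues $\alpha\bra{i}$ of $C_f$. These give precisely the points $w\bra{i}=(-\alpha\bra{i},1,0)$, and there are no others since $\deg\chi_{C_f}=3$ and $f$ is irreducible (in particular $0$ is not a root, so the point $(1,0,0)$ is excluded). The same argument with $x_2=0$ and $\chi_Z$, whose roots are the eigenvalues $\beta\bra{i}$ of $Z$, yields the points $\tilde{w}\bra{i}=(-\beta\bra{i},0,1)$. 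The irreducibility of $\chi_Z$ here is guaranteed by the MRD property (every nonzero matrix in the code has irreducible characteristic polynomial), so $0$ is again excluded as a root and the point $(1,0,0)$ does not lie on $x_2=0\cap\Gamma$.

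Finally, for part~(c): here I would use~\eqref{e-StructMat}, which says $x_1I+x_2C_f+x_3Z=(x\,|\,C_gx\,|\,\hat{Z}x)$ for $x=(x_1,x_2,x_3)\T\in\F_{q^3}^3$. If $v$ is an eigenvector of $C_g$ for the eigenvalue $\hat\alpha\bra{0}$, then $C_g v=\hat\alpha\bra{0}v$, so the matrix $(v\,|\,C_g v\,|\,\hat Z v)=(v\,|\,\hat\alpha\bra{0}v\,|\,\hat Z v)$ has its first two columns proportional, hence is singular; thus $F(v)=0$, i.e. $v=v\bra{0}\in\Gamma$, and applying the Frobenius $k\mapsto k\bra{i}$ (which fixes $F$ since its coefficients lie in $\F_q$, and sends the eigenvector $v$ for $\hat\alpha\bra{0}$ to an eigenvector $v\bra{i}$ for $\hat\alpha\bra{i}$, as recorded in Notation~\ref{N-Notation}(ii)) gives $v\bra{i}\in\Gamma$ for all $i$. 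For the claim $v_2\neq0\neq v_3$: if $v_3=0$ then $v=(v_1,v_2,0)\T$ is a common eigenvector behavior forcing $F$ to vanish at an $\F_q$-rational-type point, or more directly, from $C_g v=\hat\alpha\bra{0}v$ with $C_g$ the companion matrix in~\eqref{e-DualTriple} one reads off the recursion on the coordinates of $v$: the companion form forces $v_2=\hat\alpha\bra{0}v_1$ and $v_3=\hat\alpha\bra{0}v_2$ (up to scaling, taking $v_1=1$), so $v_3=0$ would force $\hat\alpha\bra{0}=0$, contradicting irreducibility of $\chi_{C_g}=g$; similarly $v_2=0$ forces $\hat\alpha\bra{0}=0$, the same contradiction. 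I expect the mild obstacle to be bookkeeping the exact normalization of the companion-matrix eigenvector and making the Frobenius-equivariance of part~(c) fully precise, but no genuinely hard computation is involved; everything reduces to the irreducibility of the relevant characteristic polynomials (a consequence of the MRD property) plus the identity~\eqref{e-StructMat}.
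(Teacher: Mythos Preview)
Your arguments for (a), (b), and the first half of (c) are correct and essentially identical to the paper's.

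There is, however, a genuine slip in your argument for $v_2\neq0\neq v_3$. You claim that the eigenvector equation $C_g v=\hat\alpha\bra{0}v$ forces the recursion $v_2=\hat\alpha\bra{0}v_1$ and $v_3=\hat\alpha\bra{0}v_2$. That recursion holds for the \emph{transpose} of a companion matrix, not for $C_g$ as written in~\eqref{e-DualTriple}. With the paper's convention one actually gets
\[
   z_1 v_3=\hat\alpha\bra{0}v_1,\qquad v_1+z_2 v_3=\hat\alpha\bra{0}v_2,\qquad v_2+z_3 v_3=\hat\alpha\bra{0}v_3.
\]
From these the conclusion $v_3\neq0$ still follows as you say (if $v_3=0$ then, since $\hat\alpha\bra{0}\neq0$, the first two equations force $v_1=v_2=0$). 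But your handling of the case $v_2=0$ is wrong: with the correct third equation, $v_2=0$ and $v_3\neq0$ yield $z_3=\hat\alpha\bra{0}$, which is a contradiction because $z_3\in\F_q$ whereas $\hat\alpha\bra{0}\in\F_{q^3}\setminus\F_q$ --- not because $\hat\alpha\bra{0}=0$. This is exactly the computation the paper carries out. So your overall strategy is right, but the eigenvector bookkeeping needs to be redone with the correct companion-matrix orientation.
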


\begin{proof}
(a) is clear because every nontrivial linear combination $x_1I+x_2 C_f+x_3 Z$ with coefficients~$x_i$ in~$\F_q$ is invertible thanks to the MRD property.
\\
(b) is obvious and the first part of~(c) follows from~\eqref{e-StructMat}.
For the second statement in~(c) consider $C_g(v_1,v_2,v_3)\T=(z_1v_3,\,v_1+z_2v_3,\,v_2+z_3v_3)\T=\hat{\alpha}(v_1,v_2,v_3)\T$.
If $v_3=0$, then  also $v_1=v_2=0$, a contradiction.
If $v_2=0\neq v_3$, then $z_3=\hat{\alpha}$, which is a contradiction because $z_3$ is in~$\F_q$ whereas~$\hat{\alpha}$ is not.
Thus $v_2\neq0\neq v_3$.
\end{proof}

The next result is a crucial step for the proof of Theorem~\ref{T-MatSigma} and involves a geometric argument.

\begin{prop}[\mbox{\cite[Lem.~9.1, Prop.~9]{Men73}}]\label{P-ReducGamma}
Consider the setting from Notation~\ref{N-Notation}.
The curve~$\Gamma$ is the union of three conjugate lines (with coefficients in~$\F_{q^3}$).
More precisely,
\[
   F(x_1,x_2,x_3)=\prod_{i=0}^2(x_1+\alpha\bra{i}x_2+\beta\bra{i}x_3).
\]
\end{prop}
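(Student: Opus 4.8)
The plan is to show directly that each of the three linear forms $L_i(x_1,x_2,x_3):=x_1+\alpha\bra{i}x_2+\beta\bra{i}x_3$ divides $F$, and then conclude by a degree count. First I would exhibit a point (in fact a whole line) where $F$ vanishes. Recall that by definition $\alpha\bra{i}$ is an eigenvalue of $C_f$ and $\beta\bra{i}$ an eigenvalue of $Z$; the key claim is that these can be chosen consistently, i.e.\ that $C_f$ and $Z$ share a common eigenvector. This is exactly what~\eqref{e-StructMat} gives: for $x\in\F_{q^3}^3$ we have $x_1I+x_2C_f+x_3Z=(x\,|\,C_gx\,|\,\hat{Z}x)$, so if $v$ is an eigenvector of $C_g$ for the eigenvalue $\hat{\alpha}\bra{0}$ (Notation~\ref{N-Notation}(ii)), then $(v\,|\,C_gv\,|\,\hat{Z}v)=(v\,|\,\hat{\alpha}\bra{0}v\,|\,\hat{Z}v)$ has rank at most~$2$; hence $\det(x_1I+x_2C_f+x_3Z)=0$ for every $(x_1,x_2,x_3)$ with $x_1v+x_2\hat{\alpha}\bra{0}v+x_3\hat{Z}v=0$, and more to the point the singular locus of the pencil is controlled by such incidences. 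So $\Gamma$ genuinely contains lines.

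The cleanest route, though, is to evaluate $F$ along a candidate line and use that $F$ is a cubic form whose restriction to $x_3=0$ and to $x_2=0$ is already known (Proposition~\ref{P-Facts}(b)): we have $F(x_1,x_2,0)=\det(x_1I+x_2C_f)=\prod_{i=0}^2(x_1+\alpha\bra{i}x_2)$ up to sign (it is the reverse characteristic polynomial of $C_f$), and likewise $F(x_1,0,x_3)=\prod_{i=0}^2(x_1+\beta\bra{i}x_3)$. Thus the three points $w\bra{i}=(-\alpha\bra{i},1,0)$ and the three points $\tilde w\bra{i}=(-\beta\bra{i},0,1)$ all lie on $\Gamma$. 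I would then argue that the $w\bra{i}$ and $\tilde w\bra{i}$ can be matched up — after relabelling the conjugates of $\beta$ — so that $w\bra{i},\tilde w\bra{i}$ and a third collinear point all lie on $\Gamma$; concretely, the line through $w\bra{i}$ and $\tilde w\bra{i}$ is exactly $\{L_i=0\}$, and one checks $L_i\mid F$ using that a plane cubic containing a line splits off that line (Bézout: a line meets a cubic in $3$ points counted with multiplicity, and here the intersection is forced to be the whole line by the vanishing along the two coordinate axes combined with the common-eigenvector incidence from~\eqref{e-StructMat}). Since the three forms $L_0,L_1,L_2$ are pairwise non-proportional (their $x_2,x_3$-coefficients are the distinct conjugates $\alpha\bra{i}$, resp.\ $\beta\bra{i}$, of elements of $\F_{q^3}\setminus\F_q$), their product is a cubic dividing the cubic $F$, and comparing leading coefficients (the coefficient of $x_1^3$ is $1$ on both sides) gives $F=\prod_{i=0}^2 L_i$. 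Galois-conjugating shows the factorization is defined over $\F_q$, consistent with $F\in\F_q[x_1,x_2,x_3]$.

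The main obstacle I anticipate is the matching step: a priori the eigenvalues $\alpha\bra{i}$ of $C_f$ and $\beta\bra{j}$ of $Z$ are indexed by two independent Frobenius orbits, and one must pin down the correct pairing so that each $L_i=x_1+\alpha\bra{i}x_2+\beta\bra{i}x_3$ actually divides $F$ rather than merely passing through two of the six axis-points. Resolving this is where~\eqref{e-StructMat} and the common eigenvector $v$ of $C_g$ are essential: the relation $(v\,|\,C_gv\,|\,\hat Zv)=x_1I+\cdots$ evaluated at $v$ identifies, for the distinguished conjugate, which eigenvalue of $C_f$ (namely the one whose eigenvector is $v$) goes with which eigenvalue of $Z$, and Frobenius then propagates the pairing to the other two conjugates. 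Once the pairing is fixed, the rest is Bézout plus a leading-coefficient comparison, which is routine.
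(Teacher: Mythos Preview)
Your approach has a genuine gap at the divisibility step. You propose to show that each line $\{L_i=0\}$ lies on~$\Gamma$ by exhibiting intersection points and invoking B\'ezout. But B\'ezout gives exactly three intersection points (with multiplicity) between a line and a plane cubic not containing that line; to force $L_i\mid F$ you need \emph{more} than three, whereas you exhibit at best three: $w\bra{i}$, $\tilde w\bra{i}$, and a point coming from the eigenvector~$v$. Even those three are not secured. The points $w\bra{i}$ and $\tilde w\bra{j}$ both lie on $\{L_i=0\}$ only once the pairing $i\leftrightarrow j$ has been fixed, which is the matching problem you yourself flag as the main obstacle; and your proposed resolution via~$v$ does not work as stated, because~$v$ is an eigenvector of~$C_g$, not of~$C_f$ or~$Z$. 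The identity~\eqref{e-StructMat} tells you that $v\in\Gamma$ (this is Proposition~\ref{P-Facts}(c)), but it does not place~$v$ on any particular line $\{L_i=0\}$, nor does it make~$v$ a common eigenvector of~$C_f$ and~$Z$. In fact, the existence of a common eigenvector of~$C_f$ and~$Z$ over~$\F_{q^3}$ is equivalent to the linear factorization you are trying to prove, so this route is circular.

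The paper proceeds in the opposite order. It first establishes reducibility of~$F$ over~$\F_{q^3}$ via the Hasse--Weil bound: were~$F$ irreducible, the curve~$\Gamma$ would have an $\F_q$-rational point, contradicting Proposition~\ref{P-Facts}(a). A short Galois argument then rules out a quadratic-times-linear factorization (Frobenius invariance would force both factors into $\F_q[x_1,x_2,x_3]$, again producing rational points), so~$F$ splits as a product of three conjugate lines. Only \emph{after} this is known are the axis points used: each of the three lines contains exactly one $w\bra{i}$ and one $\tilde w\bra{j}$ (the axes $x_2=0$ and $x_3=0$ are not components of~$\Gamma$), and the line through $w\bra{i}$ and $\tilde w\bra{j}$ is $x_1+\alpha\bra{i}x_2+\beta\bra{j}x_3=0$; relabelling the~$\beta$'s so that $j=i$ gives the stated factorization. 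The eigenvector~$v$ plays no role here; it enters only later, in the proof of Theorem~\ref{T-MatSigma}, to identify $\beta=\phi(k,\hat{k})$.
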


\noindent{\it Sketch of Proof.} (See~\cite{Men73})
The polynomial $F(x_1,x_2,x_3)=\det(x_1I+x_2 C_f+x_3 Z)$ has coefficients in~$\F_q$.
If~$F$ was irreducible in $\F_{q^3}[x_1,x_2,x_3]$, then the Hasse-Weil inequality $|N-(q+1)|\leq 2\gamma\sqrt{q}$,
where~$N$ is the number of~$\F_q$-rational points and~$\gamma$ the genus of~$\Gamma$, would imply $N>0$, contradicting Proposition~\ref{P-Facts}(a).
Hence~$F$ is reducible in $\F_{q^3}[x_1,x_2,x_3]$.
Suppose~$F$ has a quadratic irreducible factor, and thus also a linear factor.
Since~$F$ is invariant under the Frobenius homomorphism, extended coefficientwise to $\F_{q^3}[x_1,x_2,x_3]$, the same is true for these irreducible factors (up to a constant factor).
Hence the factors are in $\F_q[x_1,x_2,x_3]$, in contradiction to Proposition~\ref{P-Facts}(a).
Thus~$F$ factors into~$3$ (mutually conjugate) lines.
Now we consider the~$6$ points on~$\Gamma$ given in Proposition~\ref{P-Facts}(b) and argue as follows.
\begin{romanlist}
\item $w\bra{0},w\bra{1},w\bra{2}$ are on the line $x_3=0$, and this line is clearly not a factor of~$F(x_1,x_2,x_3)$ thanks to Proposition~\ref{P-Facts}(a).
\item $\tilde{w}\bra{0},\tilde{w}\bra{1},\tilde{w}\bra{2}$ are on the line $x_2=0$, which is also not a factor.
\item Since~$\Gamma$ is the union of three lines, each~$w\bra{i}$ must be on the same line as a point $\tilde{w}\bra{j}$.
        After appropriate re-indexing this results in the three lines given in the statement.
        \hfill$\square$
\end{romanlist}

\medskip

\noindent{\it Proof of Theorem~\ref{T-MatSigma}.} (See \cite[pp.~291--293]{Men73})
We have to establish the stated identity for the set~$\cS$.
\\
``$\subseteq$''
Let $(I,C_f,Z)\in\cS$ with dual triple $(I,C_g,\hat{Z})$ and where $C_f,Z,C_g,\hat{Z}$ are as in~\eqref{e-CfZ} and~\eqref{e-DualTriple}.
Recall the map~$\phi$ from~\eqref{e-phi}.
We have to show that there exist $k,\hat{k}\in\F_{q^3}$ such that $1,k,\phi(k,\hat{k})$ are linearly independent and $(C_f,Z)=(\Sigma_1(k,\hat{k}),\Sigma_2(k,\hat{k}))$.
Recall that~$C_f$ and $C_g$ are the companion matrices of the irreducible polynomials $f=x^3-cx^2-bx-a$ and $g=x^3-z_3x^2-z_2x-z_1$, respectively.
Let $k\bra{i}$ and $\hat{k}\bra{i},\,i=0,1,2,$ be the roots of~$f$ and~$g$ in $\F_{q^3}$, respectively.
Furthermore, let $\beta\bra{i}\in\F_{q^3},\,i=0,1,2,$ be the eigenvalues of~$Z$.
Now we are in the setting of Notation~\ref{N-Notation} with $\alpha\bra{i}=k\bra{i}$ and $\hat{\alpha}\bra{i}=\hat{k}\bra{i}$.
Thanks to~\eqref{e-sigmak} we have
\begin{equation}\label{e-abczsigma}
  a=\sigma_3(k),\ b=-\sigma_2(k),\ c=\sigma_1(k),\ z_1=\sigma_3(\hat{k}),\ z_2=-\sigma_2(\hat{k}),\ z_3=\sigma_1(\hat{k}).
\end{equation}
In particular $C_f=\Sigma_1(k,\hat{k})$.
Since $\hat{k}\bra{i}$ are the eigenvalues of~$C_g$, the matrices
\[
   C_g-\hat{k}\bra{i}I=\begin{pmatrix}-\hat{k}\bra{i}&0&z_1\\1&-\hat{k}\bra{i}&z_2\\0&1&z_3-\hat{k}\bra{i}\end{pmatrix}
\]
are singular.
Using $g(\hat{k}\bra{i})=0$ one checks that
\[
  v\bra{i}:=\big(\hat{k}\bra{i}(\hat{k}\bra{i}-z_3)-z_2,\,\hat{k}\bra{i}-z_3,\,1\big)\T
\]
is an eigenvector of~$C_g$ to the eigenvalue~$\hat{k}\bra{i}$.
By Proposition~\ref{P-Facts}(c), the points $v\bra{i}$ are on the curve~$\Gamma$ given by the equation $F(x_1,x_2,x_3)=0$ (see Notation~\ref{N-Notation}).
Furthermore, thanks to Proposition~\ref{P-ReducGamma} this polynomial factors over $\F_{q^3}$ into
\begin{equation}\label{e-F}
   F(x_1,x_2,x_3)=\prod_{i=0}^2(x_1+k\bra{i}x_2+\beta\bra{i}x_3).
\end{equation}
Without loss of generality let $v=v\bra{0}$ be on the line $x_1+kx_2+\beta x_3=0$.
The point~$v$ is also on the line
\begin{equation}\label{e-baseline}
    x_1+kx_2+\phi(k,\hat{k})x_3=0
\end{equation}
because $\phi(k,\hat{k})=(k+\hat{k})(\sigma_1(\hat{k})-\hat{k})-\sigma_2(\hat{k})=(k+\hat{k})(z_3-\hat{k})+z_2$ thanks to~\eqref{e-abczsigma}.
This implies $\beta=\phi(k,\hat{k})$.
Since $\phi(k\bra{i},\hat{k}\bra{i})=\phi(k,\hat{k})\bra{i}$ we thus have
\[
   F(x_1,x_2,x_3)=\prod_{i=0}^2\big(x_1+k\bra{i}x_2+\phi(k\bra{i},\hat{k}\bra{i})x_3\big).
\]
Thanks to Proposition~\ref{P-Facts}(a) the curve~$\Gamma$ has no points in $\P^2(q)$.
But this means that $1,k,\phi(k,\hat{k})$ are linearly independent over~$\F_q$.

Consider now
\[
  \det(xI-Z)=x^3-(z_2+z'_3)x^2-(z'_1+z_3z'_2-z_2z'_3)x-(z_1z'_2-z_2z'_1).
\]
Since~$Z$ is similar to $\text{diag}(\beta\bra{0},\beta\bra{1},\beta\bra{2})$, we obtain with the aid of~\eqref{e-sigmak} and~\eqref{e-abczsigma}
\begin{align*}
  \sigma_1(\beta)&=z_2+z'_3=z'_3-\sigma_2(\hat{k}),\\
-\sigma_2(\beta)&=z'_1+z_3z'_2-z_2z'_3=z'_1+\sigma_1(\hat{k})z'_2+\sigma_2(\hat{k})z'_3,\\
 \sigma_3(\beta)&=z_1z'_2-z_2z'_1=\sigma_3(\hat{k})z'_2+\sigma_2(\hat{k})z'_1.
\end{align*}
These are exactly the identities at the top of \cite[p.~292]{Men73}\footnote{In \cite{Men73} the quantities $c_{22}^h$ and $c_{21}^h$ play the role of our $z'_{h+1}$ and $z_{h+1}$,
respectively, for $h=0,1,2$, and $\alpha$ is our~$\beta$.
In the formula for $\sigma_3(\alpha)$ in the middle of p.~292 some terms are missing:  In the notation of~\cite{Men73} it has to read
$\sigma_3(\alpha)=\sigma_3(k+k')\sigma_3(c_{21}^2-k')-\sigma_2(\beta)\sigma_2(k')+\sigma_1(\beta)\sigma_2^2(k')-\sigma_2^3(k')$.},
and extended use of identities for the symmetric functions, derived in \cite[Lem.~10.3]{Men73}, leads to
\[
  z'_3=\sigma_1(k)\sigma_1(\hat{k})-\sigma_1(k\hat{k}),\quad
  z'_2=-\sigma_3(k+\hat{k}),\quad
  z'_1=\sigma_1(k)\sigma_3(\hat{k})+\sigma_1(\hat{k})\sigma_3(k)-\sigma_2(k\hat{k}),
\]
see \cite[pp.~292]{Men73}.
This means that $Z=\Sigma_2(k,\hat{k})$, as desired.
\\
``$\supseteq$''
We have to show that, whenever $1,k,\phi(k,\hat{k})$ are linearly independent, $(I,\,\Sigma_1(k,\hat{k}),\,\Sigma_2(k,\hat{k}))$ generates an MRD code.
This can be checked in two ways.
On the one hand, a very tedious computation (with the help of a CAS) shows that
\[
     \det\big(x_1I+x_2\Sigma_1(k,\hat{k})+x_3\Sigma_2(k,\hat{k})\big)=\prod_{i=0}^2\big(x_1+k\bra{i}x_2+\phi(k,\hat{k})\bra{i}x_3\big).
\]
Since $1,k,\phi(k,\hat{k})$ are by assumption linearly independent over~$\F_q$, the linear forms on the right hand side have no nonzero roots in $\F_q^3$ and thus the
triple $(I,\,\Sigma_1(k,\hat{k}),\,\Sigma_2(k,\hat{k}))$ generates an MRD code.
An alternative proof is given in~\cite{Men73} where a geometric argument is used in combination with more general versions of Propositions~\ref{P-Opp} --~\ref{P-ReducGamma}.
Those versions apply to more general triples $(I,C_f,Z)$, where the resulting $3$-dimensional algebras may have zero divisors, and lead to
a characterization of the void of zero divisors.
\hfill$\square$

We close this section with rewriting the characterizations of commutativity and associativity in Proposition~\ref{P-Fcommass} in terms of the parameters $k,\hat{k}$ from Theorem~\ref{T-MatSigma}.

\begin{prop}[\mbox{\cite[Prop.~11]{Men73}}]\label{P-Fcommass2}
Consider $(I,\Sigma_1(k,\hat{k}),\Sigma_2(k,\hat{k}))\in\cS$ as in Theorem~\ref{T-MatSigma}.
Thus $1,k,\phi(k,\hat{k})$ are linearly independent.
Let $a=\sigma_3(k),\,b=-\sigma_2(k),\,c=\sigma_1(k)$.
\begin{alphalist}
\item $\cF(I,\Sigma_1(k,\hat{k}),\Sigma_2(k,\hat{k}))$ is commutative iff $\hat{k}=k\bra{i}$ for some $i=0,1,2$.
        The case $i=0$ only occurs if ${\rm char}(\F_q)\neq2$.
        For $\hat{k}\in\{k\bra{1},\,k\bra{2}\}$ we have
        \[
            \Sigma_1(k,\hat{k})=\begin{pmatrix}0&0&a\\1&0&b\\0&1&c\end{pmatrix},\
             \Sigma_2(k,\hat{k})=\begin{pmatrix}0&a&ac\\0&b&a+bc\\1&c&b+c^2\end{pmatrix}=\Sigma_1(k,\hat{k})^2,
        \]
        whereas for $\hat{k}=k$ we have
                \[
            \Sigma_1(k,k)=\begin{pmatrix}0&0&a\\1&0&b\\0&1&c\end{pmatrix},\
             \Sigma_2(k,k)=\begin{pmatrix}0&a&4ac-b^2\\0&b&-8a\\1&c&-2b\end{pmatrix}.
        \]
\item $\cF(I,\Sigma_1(k,\hat{k}),\Sigma_2(k,\hat{k}))$ is associative (and thus commutative) iff $\hat{k}=k\bra{i}$ for some $i=1,2$.
\end{alphalist}
\end{prop}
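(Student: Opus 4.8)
The plan is to read everything off from Proposition~\ref{P-Fcommass}, the parametrization in Theorem~\ref{T-MatSigma}, and the injectivity statement Proposition~\ref{P-SigmaInj}; the only genuine computation needed is the evaluation of the two explicit matrices in part~(a).

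\emph{Part (a).} By Theorem~\ref{T-MatSigma} one has $a=\sigma_3(k)$, $b=-\sigma_2(k)$, $c=\sigma_1(k)$, while the second column of $\Sigma_2(k,\hat k)$ is $(\sigma_3(\hat k),-\sigma_2(\hat k),\sigma_1(\hat k))\T$. Thus Proposition~\ref{P-Fcommass}(a) says that $\cF(I,\Sigma_1(k,\hat k),\Sigma_2(k,\hat k))$ is commutative iff this second column equals $(a,b,c)\T$, i.e.\ iff $\sigma_j(\hat k)=\sigma_j(k)$ for $j=1,2,3$, which by~\eqref{e-sigmaInv} means $\hat k\in\{k\bra{0},k\bra{1},k\bra{2}\}$. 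For the remark on $i=0$: by~\eqref{e-phi} we have $\phi(k,k)=-2k^2+2\sigma_1(k)k-\sigma_2(k)$, so in characteristic~$2$ this lies in~$\F_q$ and $(k,k)$ violates the independence hypothesis of Theorem~\ref{T-MatSigma}, whereas in odd characteristic the $k^2$-coefficient is nonzero, so $1,k,\phi(k,k)$ are independent (recall $k\notin\F_q$, hence $k^2\notin\subspace{1,k}$); for $i\in\{1,2\}$, \eqref{e-phikk} gives $\phi(k,k\bra{i})=k^2$, so $(k,k\bra{i})$ always lies in the domain. It remains to compute the last column of $\Sigma_2$ in the two commutative cases directly from the formulas of Theorem~\ref{T-MatSigma}, using $\sigma_j(k\bra{n})=\sigma_j(k)$ and the symmetric-function identities for $\sigma_j$ evaluated at sums and products of conjugates (for instance $\prod_{i<j}(k\bra{i}+k\bra{j})=\sigma_1(k)\sigma_2(k)-\sigma_3(k)$, $\sigma_1(k^2)=\sigma_1(k)^2-2\sigma_2(k)$, $\sigma_2(k^2)=\sigma_2(k)^2-2\sigma_1(k)\sigma_3(k)$, $\sigma_j(2k)=2^j\sigma_j(k)$). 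For $\hat k=k\bra{n}$ with $n\in\{1,2\}$ this yields $\Sigma_2(k,k\bra{n})=C_f^2$ (which can alternatively be seen by noting that $\Sigma_2(k,k\bra{n})$ and $C_f^2$ share their first two columns and, by the determinant identity established in the proof of Theorem~\ref{T-MatSigma} together with~\eqref{e-phikk}, yield the same pencil determinant, which forces equality since $a=\det C_f\neq0$); for $\hat k=k$ one obtains the matrix with last column $(4ac-b^2,-8a,-2b)\T$ displayed in the statement.

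\emph{Part (b).} The backward direction is immediate: if $\hat k=k\bra{i}$ with $i\in\{1,2\}$, then $\Sigma_2(k,\hat k)=C_f^2$ by part~(a), so the algebra is associative (and equals $\F_{q^3}$) by Proposition~\ref{P-Fcommass}(b). Conversely, suppose the algebra is associative. Being a finite associative division algebra, it is a finite field by Wedderburn's theorem, hence commutative, so part~(a) gives $\hat k=k\bra{i}$ for some $i\in\{0,1,2\}$; it remains to exclude $i=0$. If $\hat k=k$, then Proposition~\ref{P-Fcommass}(b) forces $\Sigma_2(k,k)=C_f^2$, while part~(a) gives $\Sigma_2(k,k\bra{1})=C_f^2$ as well, and $\Sigma_1$ does not depend on its second argument; hence $\big(\Sigma_1(k,k),\Sigma_2(k,k)\big)=\big(\Sigma_1(k,k\bra{1}),\Sigma_2(k,k\bra{1})\big)$. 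Applying Proposition~\ref{P-SigmaInj}(2) to the pairs $(k,k\bra{1})$ and $(k,k)$ then forces $(k,k)=(k\bra{r},k\bra{1+r})$ or $(k,k)=(k\bra{1+r},k\bra{r})$ for some~$r$, and in either case $k\bra{r}=k\bra{1+r}$, i.e.\ $k\in\F_q$, contradicting that~$k$ has an irreducible cubic minimal polynomial. (Equivalently, comparing $\Sigma_2(k,k)$ entrywise with $C_f^2$ forces $f=(x-c/3)^3$ if ${\rm char}(\F_q)\neq3$ and $f=x^3-a$ with~$a$ a cube if ${\rm char}(\F_q)=3$, both reducible.) Therefore $i\in\{1,2\}$, as claimed.

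The step needing the most care is the explicit evaluation of $\Sigma_2(k,k)$ (and, to a lesser extent, of $\Sigma_2(k,k\bra{n})$) in part~(a): it is the only place where the symmetric-function identities must be pushed through in full, and it is precisely what makes the exclusion of $i=0$ transparent in part~(b); everything else is a short deduction from results already established.
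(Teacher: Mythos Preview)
Your proof is correct and follows essentially the same route as the paper: both reduce commutativity and associativity to Proposition~\ref{P-Fcommass}, translate the column conditions into $\sigma_j(\hat{k})=\sigma_j(k)$ via~\eqref{e-sigmaInv}, and handle the linear-independence constraint on $\phi(k,\hat{k})$ case by case using~\eqref{e-phikk}. The only substantive difference is in part~(b): the paper simply observes that associativity amounts to $\Sigma_2(k,\hat{k})=\Sigma_1(k,\hat{k})^2$ and then reads the answer off the two explicit matrices already displayed in~(a), whereas you additionally invoke Proposition~\ref{P-SigmaInj}(2) to exclude $\hat{k}=k$ without comparing entries; your parenthetical entrywise comparison (leading to $f=(x-c/3)^3$, resp.\ $f=x^3-a$) is exactly what the paper's ``consequence of~(a)'' leaves implicit.
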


The case $\hat{k}=k$, describing the unique commutative, non-associative semifield for a given~$k$ (in odd characteristic),
appears already in \cite[p.~374]{Dick06} by Dickson.
The case $\hat{k}\in\{k\bra{1},k\bra{2}\}$, given by the set~$\cS''$ in Corollary~\ref{C-SizeS}, parametrizes the semifields that are isomorphic to $\F_{q^3}$.
On the other hand, the set~$\cS'$ of the same corollary parameterizes the proper semifields (i.e., those that are not a field), including the commutative ones.
Proposition~\ref{P-CardSet} thus reflects the well-known fact~\cite[p.~208]{Knu65} that there are no proper semifields of order~$8$.

Finally note that Part~(a) above states that $\Sigma_j(k,k\bra{1})=\Sigma_j(k,k\bra{2})$ for $j=1,2$.
This agrees with the second option in Proposition~\ref{P-SigmaInj}(2).

\begin{proof}
(a) Using Proposition~\ref{P-Fcommass}(a) and~\eqref{e-sigmak} we obtain that $\cF(I,\Sigma_1(k,\hat{k}),\Sigma_2(k,\hat{k}))$ is commutative iff the minimal polynomials
of~$k$ and~$\hat{k}$ are identical.
This is the case iff $\hat{k}=k\bra{i}$ for some $i=0,1,2$.
It remains to see for which of these cases $1,k,\phi(k,\hat{k})$ are linearly independent.
We certainly need $k\in\F_{q^3}\setminus\F_q$.
From~\eqref{e-phikk} we know that  the elements are linearly independent if $\hat{k}\in\{k\bra{1},k\bra{2}\}$.
If $\hat{k}=k$, then $\phi(k,k)=2k(\sigma_1(k)-k)-\sigma_2(k)$.
Thus, if ${\rm char}(\F_q)\neq2$, then $1,k,\phi(k,k)$ are linearly independent because $\sigma_j(k)\in\F_q$ and $k\not\in\F_q$.
If, however, ${\rm char}(\F_q)=2$, then $\phi(k,k)\in\F_q$ and thus $1,k,\phi(k,\hat{k})$ are linearly dependent.
The rest of~(a) follows from the very definition of the matrices in Theorem~\ref{T-MatSigma} and some basic identities for
the symmetric functions, given in \cite[Lem.~10.3]{Men73}.
\\
(b) By Proposition~\ref{P-Fcommass}(b), $\cF(I,\Sigma_1(k,\hat{k}),\Sigma_2(k,\hat{k}))$ is associative iff
$\Sigma_1(k,\hat{k})^2=\Sigma_2(k,\hat{k})$.
Since commutativity follows from associativity for finite semifields, the statement is a consequence of~(a).
\end{proof}

\section*{Outlook}
It is natural to ask whether the methods can be extended to more general $[m\times n;\delta]$-MRD codes. 
To address this question let us recall where the particular case $(m,n,\delta)=(3,3,3)$ played a role.
Firstly, $\delta=n=m$ implies that (after some normalization) all nonzero matrices in the MRD code have no eigenvalues in~$\F_q$.
Secondly, since $n=3$,  the latter is equivalent to the irreducibility of their characteristic polynomials.
As a consequence, each such matrix is similar to a companion matrix.
Since $[3\times 3;3]$-MRD codes are $3$-dimensional, all of this has led to the set~$\cS$ consisting of matrix triples $(I,C_f,Z)$, and it remained to 
investigate which matrices~$Z$ lead to MRD codes.
This by itself was a major feat accomplished in Theorem~\ref{T-MatSigma} by Menichetti.
Let us briefly discuss the general square case of $[n\times n;n]$-MRD codes.
The first property given above (no eigenvalues in~$\F_q$) remains obviously true.
The second property (similarity to a companion matrix) needs to be replaced by general rational canonical forms (or other canonical forms).
While these two steps may just scale up in technicality, the main difficulty for this larger case, however, arises from the larger dimension of the MRD code.
A ``normalized'' basis may now be of the form $(I,C_{\rm rat.\,can.\,form},Z_1,\ldots,Z_{n-2})$, and one is left with the task of estimating the number of matrix tuples $(Z_1,\ldots,Z_{n-2})$ 
leading to an MRD code. 
It seems that entirely new methods are needed for this problem.
We thus close the paper with

\medskip
\noindent\textbf{Open Problem:}  For which $(m,n,\delta)$ are $[m\times n;\delta]$-MRD codes sparse?

\end{document}